\newcounter{magicrownumbers}
\def\vol{{\rm vol}}
\def\Var{{\rm Var}}
\def\TV{{\rm TV}}
\newcommand{\ubar}[1]{\underaccent{\bar}{#1}} 
\newcommand{\anglebinom}[2]{\left<\genfrac{}{}{0pt}{}{#1}{#2}\right>}
\newtheoremstyle{thm_custom}
{3pt} 
{3pt} 
{\itshape} 
{} 
{\itshape} 
{:} 
{.5em} 
{\thmname{#1}\thmnumber{\textit{ #2}}\thmnote{ (#3)}}   
\newtheoremstyle{def_custom}
{3pt} 
{3pt} 
{} 
{} 
{\itshape} 
{:} 
{.5em} 
{\thmname{#1}\thmnumber{\textit{ #2}}\thmnote{ (#3)}}   
\theoremstyle{thm_custom}
\newtheorem{theorem}{Theorem}
\newtheorem{lemma}{Lemma}
\newtheorem{corollary}{Corollary}
\newtheorem{proposition}{Proposition}
\theoremstyle{def_custom}
\theoremstyle{remark}
\begin{document}
\title{New Channel Coding Lower Bounds for Noisy Permutation Channels \\
\thanks{}
}

\author{
	\IEEEauthorblockN{ Lugaoze Feng}
	\IEEEauthorblockA{the School of Electronics \\
		Peking University\\
		Beijing \\
		lgzf@stu.pku.edu.cn}
	\and
	\IEEEauthorblockN{ Xunan Li}
	\IEEEauthorblockA{National Computer Network \\
		Emergency Response Technical Team \\
		Coordination Center of China\\
		Beijing \\
		lixunan@cert.org.cn}
	\and
	\IEEEauthorblockN{ Guocheng Lv}
	\IEEEauthorblockA{the School of Electronics \\
		Peking University\\
		Beijing \\
		lv.guocheng@pku.edu.cn}
	\and
	\IEEEauthorblockN{ Ye jin}
	\IEEEauthorblockA{the School of Electronics \\
		Peking University\\
		Beijing, China \\
		jinye@pku.edu.cn}
}

\markboth
{IEEE TRANSACTIONS ON COMMUNICATIONS,~Vol.~18, No.~9, September~2020}
{IEEE TRANSACTIONS ON COMMUNICATIONS,~Vol.~18, No.~9, September~2020}

\maketitle

\begin{abstract}
	Motivated by the application of point-to-point communication networks and biological storage, we investigate new achievability bounds for noisy permutation channels with strictly positive and full-rank square matrices. Our new bounds use $\epsilon$-packing with Kullback-Leibler divergence as a metric to bound the distance between distributions and are tighter than existing bounds. Additionally, Gaussian approximations of achievability bounds are derived, and the numerical evaluation shows the precision of the approximation.
\end{abstract}

\begin{IEEEkeywords}
Noisy permutatioin channel, finite blocklength, divergence packing, lower bound.
\end{IEEEkeywords}

\section{Introduction}
	\IEEEPARstart{T}{he} \textit{noisy permutation channel}, consisting of a discrete memoryless channel (DMC) and a uniform random permutation block, was introduced in \cite{makur_coding_2020}, which is a point-to-point communication model that captures the out-of-order arrival of packets. This situation often occurs in applications such as communication networks and biological storage systems. Previously, several advances have been made to asymptotic bounds, including the binary channels in \cite{makur_bounds_2020}, the capacity of "full-rank" DMCs in \cite{makur_coding_2020}, and the converse bounds based on divergence covering \cite{tang_capacity_2023}. 
	
	The code lengths of suitable code in such systems are in the order of thousands or hundreds \cite{maclaren_walsh_optimal_2009}, \cite{heckel_fundamental_2017}, invalidating the asymptotic assumptions in classical information theory. We initiate the study of new channel coding bounds to extend the information-theoretic results for noisy permutation channels in finite blocklength analysis. Finite blocklength analysis and finer asymptotics are important branches of research of information theory. There has been interest in this topic since the seminal work of \cite{strassen_asymptotic_nodate}, \cite{hayashi_information_2009} and \cite{polyanskiy_channel_2010}. These works suggest that the channel coding rate in the finite blocklength regime is closely related to the \textit{information density} \cite{polyanskiy_2023_book}, i.e., the stochastic measure of the input distribution and channel noise. The second-order approximation of traditional channels involves the variance of the information density, which has been shown to approximate the channel coding rate at short blocklengths closely.
	
	In noisy permutation channels, the finer asymptotic completely differs from that in traditional channels. Specifically, since the messages are mapped to different probability distributions, the only statistical information the receiver would use from received codeword $Y^n$ is which marginal distribution $Y^n$ belongs to. This paper presents new nonasymptotic achievability bounds on the maximal rate achievable for a given blocklength and average probability of error for strictly positive square matrices with full rank, which use $\epsilon$-packing \cite{shiryayev_selected_1993,yang_information-theoretic_1999} with Kullback-Leibler (KL) divergence as a metric to lower bound the distance between each distribution and are tighter than existing bounds. In addition, for such a matrix $W$, we show that the maximal achievable size of $M$ is well approximated by
	\begin{equation}
			\log M^*(n,\epsilon)  \approx \ell \log \left( \frac{ \sqrt{n}}{-\ell\Phi^{-1}(\epsilon/|\mathcal{R}|)} \right) + \log \lambda,
	\end{equation}
	where $\xi = 2\binom{|\mathcal{Y}|}{2}$, $\ell = \mathsf{rank}(W)-1$, $\lambda$ is the volume ratio of $\Delta_{\ell}^* = \{ P_X \circ W:P_X \in \Delta_{|\mathcal{X}|-1} \}$ and $\Delta_{\ell}$. 
	By the numerical evaluation, this expression leads to tight approximations of the maximal code size achievable for blocklengths $n$ as short as $100$ in binary case. 
	
	We continue this section with the motivation and notation. Section \ref{Section_Sys_Model} sets up the system model. In Section \ref{Section_New_Bounds}, we first give a method to construct a set of divergence packing centers (message set) and bounds for the packing number. Then, we give our new achievability bound and particularize this bound to the typical DMCs. Section \ref{Section_Gaussian_Approx} studies the asymptotic behavior of the achievability using Gaussian approximation analysis and applies it to the typical DMCs. We conclude this paper in Section \ref{Section_conclusion}. 
	
\subsection{Motivation}
	First, the random deletion channel is a classical model in which the receiver will not be notified of dropped symbols. While in the erasure channel, the receiver is notified of dropped symbols. Prior work in coding theory has studied the random deletion and erasure channels \cite{diggavi2001transmission,metzner2009simplification,mitzenmacher2006polynomial}, which the error-correcting codes for these models can be as preliminary steps for noisy permutation channels \cite{gadouleau_binary_2010}. Secondly, permutation channels are a suitable model for the multipath routed network in which packets arrive in different delays (out of order) \cite{walsh_optimal_2008,maclaren_walsh_optimal_2009}. Some recent works focus on prefect codes for such networks \cite{kovacevic_subset_2013,kovacevic_perfect_2015,kovacevic_codes_2018}. Finally, DNA-based storage systems encode data as unordered codewords of nucleotides \cite{yazdi_dna-based_2015,kiah2016codes}. The random permuted DNA fragments are described by the noisy shuffling channel \cite{shomorony_capacity_2019}. These systems are also the
	motivation for studying noisy permutation channels.
	
\subsection{Notation}
We use $[n] = \{ 1,...,n \}$, $\mathbb{Z}_{\ge 0} = \{ 0, 1,... \}$ to represent integer intervals. Let $1\{ \cdot \}$ denote the indicator function. For given $\mathcal{X}$ and random variable $X \in \mathcal{X}$, we write $X \sim P_X$ to indicate that the random variable $X$ follows the distribution $P_X$. Let $X^n = (X_1,...,X_n)$ and $x^n = (x_1,...,x_n)$ denote the random vector and its realization in the $n$-th Cartesian product $\mathcal{X}^n$, respectively. A simplex on $\mathbb{R}^{|\mathcal{X}|-1}$ is a set of points 
\begin{equation}
	\Delta_{{|\mathcal{X}|-1}} = \left\{ (s_1,s_2,...,s_{|\mathcal{X}|})\in \mathbb{R}^{|\mathcal{X}|}, \ s_x \ge 0, \ \sum_{x=1}^{|\mathcal{X}|} s_x = 1 \right\}.
\end{equation}
For a $\Delta_{{|\mathcal{X}|-1}}^* \subset \Delta_{{|\mathcal{X}|-1}}$, the volume of $\Delta_{{|\mathcal{X}|-1}}^*$ is denoted by $\vol(\Delta_{{|\mathcal{X}|-1}}^*)$.

The KL divergence and the total variation are denoted by $D(\cdot\|\cdot)$ and $\TV(\cdot,\cdot)$, respectively. For a matrix $A$, we use notation $\mathsf{rank}(A)$ to represent the rank of matrix $A$. 
The probability and mathematical expectation are denoted by $\mathbb{P}[\cdot]$ and $\mathbb{E}[\cdot]$, respectively. The cumulative distribution function of the standard normal distribution is denoted by $\Phi(x)$ and $\Phi^{-1}(\cdot)$ is its inverse function.	

\section{System Model} \label{Section_Sys_Model}	
Refers to \cite{polyanskiy_2023_book}, the code $\mathcal{C}_n$ consists of a message set $\mathcal{M}$, an (possibly randomized) encoder function $f_n: \mathcal{M} \mapsto \mathcal{X}^n$, and a (possibly randomized) decoder function $g_n: \mathcal{Y}^n \mapsto \mathcal{M} \cup \{ e \}$. We write $\mathcal{X} = [|\mathcal{X}|]$ for the finite input alphabet and $\mathcal{Y} = [|\mathcal{Y}|]$ for the finite output alphabet. Denote by $|\mathcal{M}|$ the maximal code size of code $\mathcal{C}_n$.  

The sender uses encoder $f_n$ to encode the message $M$ into a $n$-length codeword $X^n$ and passes through the discrete memoryless channel $W$ to produce $Y^n \in \mathcal{Y}^n$. Then, the codeword $Y^n$ goes through the random permutation block and result in $Y_{\rm Perm}^n \in \mathcal{Y}^n$. Finally, the receiver uses decoder $g_n$ to estimate the message $\hat{M}$. We can describe these steps by the following Markov chain:
\begin{equation}
	M \rightarrow X^n \rightarrow Y^n \rightarrow Y_{\rm Perm}^n \rightarrow \hat{M}.
\end{equation}

According to \cite[Lemma 2]{makur_coding_2020}, we can describe the Markov chain of noisy permutation channels as the following equavelent model:
\begin{equation}
		M \rightarrow Z^n \rightarrow X^n \rightarrow Y^n \rightarrow \hat{M},
\end{equation}
where $Z^n$ is the original codeword, and $X^n$ is its random permutation. Both of them are taking value in $\mathcal{X}$. $X^n$ passes through the probability kernel $W$ of DMC and becomes the sequence $Y^n$, where each $Y_i \in \mathcal{Y}$. The random permutation block $P_{X^n|Z^n}$ operates as follows. First denote a random permutation as $\sigma: \{ 1,...,n \} \mapsto \{ 1,...,n \}$, drawn randomly and uniformly from the symmetric group $\mathcal{S}_n$ over $\{ 1,...,n \}$. Then we have $X_{\sigma(i)} = Z_i$ for all $i \in \{ 1,...,n \}$. The random permutation block does not change the probability distribution of codewords \cite{makur_coding_2020}, i.e., if $Z^n \stackrel{\rm i.i.d.}{\sim} P_X$, then $X^n \stackrel{\rm i.i.d.}{\sim} P_X$.

The DMC is defined by a $|\mathcal{X}| \times |\mathcal{Y}|$ matrix $W$, where $W(y|x)$ denotes the probability that the output $y \in \mathcal{Y}$ occurs given input $x \in \mathcal{X}$. We say $W $ is strictly positive if all the transition probabilities in $W $ are greater than $0$. In this paper, we assume $W$ is a strictly positive and full rank square matrix, i.e., $|\mathcal{X}| = |\mathcal{Y}| = \mathsf{rank}(W)$.

 For a given code $\mathcal{C}_n = (\mathcal{M},f_n,g_n)$, the average error probability of code $\mathcal{C}_n$ is
\begin{align}
	P_e: & = \mathbb{P} [M \not = \hat{M}].
\end{align}

The maximal code size achievable with a given blocklength and probability of error is denoted by $M^*(n,\epsilon) = \max \{|\mathcal{M}|: \exists \ \mathcal{C}_n \ {\rm such \ that }\ P_e \le \epsilon \}.$
The code rate with the given encoder-decoder pair $(f_n, g_n)$ is denoted by
\begin{equation}\label{eq:rate_function}
	R(n,\epsilon) \stackrel{\triangle}{=} \frac{\log M^*(n,\epsilon)}{\log n},
\end{equation}
where $\log(\cdot)$ is the binary logarithm (with base 2) in this paper. 
The capacity for noisy permutation channels is defined as
\begin{equation} 
	C \stackrel{\triangle}{=} \lim \limits_{\epsilon \rightarrow 0^+} \liminf \limits_{n \rightarrow \infty} \frac{\log M^*(n,\epsilon)}{\log n}.
\end{equation}

\section{New Bounds on Rate} \label{Section_New_Bounds}
In this section, we introduce our new bounds. We first constructed the message set using \textit{divergence pakcing}, i.e., $\epsilon$-packing with KL divergence as distance. Then, we give our new achievability bound. The key ingredient is our analysis of error events. Finally, using the message set, we particularize this new bound to BSC and BEC permutation channels.

\subsection{Message set and Divergence Packing} \label{Section_Divergence_Pcking}

A divergence pakcing is a set of centers in simplex $\Delta_{|\mathcal{Y}|-1}$ such that the minimum distance between each center is greater than some KL distance. Since the messages correspond to different distributions in noisy permutation channels, the message set can be equivalent to the set of marginal distributions at the receiver. The non-asymptotic channel performance is related to the likelihood ratio of two distributions (decoding metric) when using the ML decoder. The statistical mean of this decoding metric is the KL divergence, which arises from applying the law of large numbers as the blocklength grows. Thus, divergence packing can obtain an upper bound of the error probability by lower bounding the distance between each distribution. This motivates us to use it in constructing the message set.

Let $\mathcal{N}_{r_0,|\mathcal{Y}|} = \{P_1,...,P_{|\mathcal{M}|}\} \subset \Delta_{ |\mathcal{Y}|-1}$ be the set of divergence packing centers. The divergence packing number is defined as 
\begin{align}
	|\mathcal{N}_{r_0,|\mathcal{Y}|}| = \max \{ |\mathcal{M}|: & \exists \{P_1,...,P_{|\mathcal{M}|}\} \nonumber \\ 
	&\ {\rm s.t.} \ \min \limits_{i \neq j} D(P_i\|P_j) \ge r_0 \}.
\end{align}
Denote by $\Delta_{|\mathcal{Y}|-1}^* = \{ P_X \circ W:P_X \in \Delta_{|\mathcal{X}|-1} \}$ the achievability space of the marginal distribution. Similarly, denote by $\mathcal{N}_{r_0,|\mathcal{Y}|}^*$ the set of packing centers on $\Delta_{ |\mathcal{Y}|-1}^*$. In the sequel, we denote the marginal distribution corresponding to message $m$ by $P_m \in \mathcal{N}_{r_0,|\mathcal{Y}|}^*$.

Next, we introduce two methods for constructing the set of packing centers (message set) and the bounds of their size. Let $r>0$, and we consider the following set:
\begin{align} \label{eq:Gamma_message_set_DMCs}
	\Gamma_{r,|\mathcal{Y}|}^* = \bigg \{  P \in \Delta_{|\mathcal{Y}|-1}^*: & P = \left( \frac{a_1}{\lfloor 1/r \rfloor},...,\frac{a_{|\mathcal{Y}|}}{\lfloor 1/r \rfloor}\right) \nonumber \\
	& \ \ {\rm where} \ a_1,...,a_{|\mathcal{Y}|} \in \mathbb{Z}_{\ge 0}  \bigg \}.
\end{align} 
Then, we have the following result established in Appendix \ref{Appendix_divergence_packing}.

\begin{theorem} \label{thm:packing_num_subs}
	Fix a $\Delta_{|\mathcal{Y}|-1}^* \subset \Delta_{|\mathcal{Y}|-1}$ with dimension $|\mathcal{Y}|-1$. We can construct $\mathcal{N}_{r_0,|\mathcal{Y}|}^*$ by (\ref{eq:Gamma_message_set_DMCs}) with $r = \sqrt{\frac{r_0}{2 \log e}}$. Then, we have
	\begin{align}
		|\mathcal{N}_{r_0,|\mathcal{Y}|}^*| \ge & \lambda \left( \frac{1/r + |\mathcal{Y}| - 2}{|\mathcal{Y}|-1} \right)^{|\mathcal{Y}|-1} \nonumber \\
		 & \ \ \ \ \  - 2 |\mathcal{Y}| \left( \frac{( 1/r + |\mathcal{Y}| - 2)e}{|\mathcal{Y}| - 2} \right)^{|\mathcal{Y}| - 2},
	\end{align}
	where
	\begin{equation} \label{eq:definition_volume_ratio}
		\lambda = \frac{\vol(\Delta_{|\mathcal{Y}|-1}^*)}{\vol(\Delta_{|\mathcal{Y}|-1})}.
	\end{equation}
\end{theorem}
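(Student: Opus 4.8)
The plan is to take $\mathcal{N}^*_{r_0,|\mathcal{Y}|}$ to be precisely the explicit grid $\Gamma^*_{r,|\mathcal{Y}|}$ of (\ref{eq:Gamma_message_set_DMCs}) with $r=\sqrt{r_0/(2\log e)}$, to check that it is a bona fide $r_0$-divergence packing, and then to lower bound its cardinality by a lattice-point count inside $\Delta^*_{|\mathcal{Y}|-1}$.

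\emph{Step 1: the grid is an $r_0$-packing.} Write $N=\lfloor 1/r\rfloor$ and take two distinct points $P=(a_1/N,\dots,a_{|\mathcal{Y}|}/N)$ and $Q=(b_1/N,\dots,b_{|\mathcal{Y}|}/N)$ of $\Gamma^*_{r,|\mathcal{Y}|}$, with $a_i,b_i\in\mathbb{Z}_{\ge0}$ and $\sum_i a_i=\sum_i b_i=N$. The integer vector $(a_i-b_i)_i$ is nonzero with zero sum, so its positive and negative parts each have absolute sum at least $1$; hence $\|P-Q\|_1=\frac1N\sum_i|a_i-b_i|\ge 2/N$ and $\TV(P,Q)=\frac12\|P-Q\|_1\ge 1/N\ge r$, the last inequality because $N\le 1/r$. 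Pinsker's inequality in bits, $D(P\|Q)\ge 2\log e\;\TV(P,Q)^2$, then yields $D(P\|Q)\ge 2\log e\cdot r^2=r_0$ by the choice of $r$ (here one uses $r_0\le 2\log e$, so that $N\ge 1$ and the grid is nondegenerate). Hence $\min_{i\ne j}D(P_i\|P_j)\ge r_0$ on $\Gamma^*_{r,|\mathcal{Y}|}$, and it remains to count its elements.

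\emph{Step 2: counting.} Identify $\Delta_{|\mathcal{Y}|-1}$ with the full-dimensional polytope $\bar\Delta=\{s\ge0:\sum_{i=1}^{|\mathcal{Y}|-1}s_i\le 1\}\subset\mathbb{R}^{|\mathcal{Y}|-1}$, so that $\Gamma^*_{r,|\mathcal{Y}|}$ corresponds to the points of $\frac1N\mathbb{Z}^{|\mathcal{Y}|-1}$ inside the convex body $\Delta^*\subseteq\bar\Delta$, which is $(|\mathcal{Y}|-1)$-dimensional by hypothesis. Assign to each grid point $q$ the half-open cell $C_q=q+[0,1/N)^{|\mathcal{Y}|-1}$ of (chart) volume $N^{-(|\mathcal{Y}|-1)}$; these tile $\mathbb{R}^{|\mathcal{Y}|-1}_{\ge0}$. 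If $C_q\subseteq\Delta^*$ then $q\in\Gamma^*_{r,|\mathcal{Y}|}$, while the cells meeting $\Delta^*$ cover it, so their number is at least $N^{|\mathcal{Y}|-1}\vol(\Delta^*)$, and a cell meeting $\Delta^*$ but not contained in it must meet $\partial\Delta^*$. Therefore
\[
|\Gamma^*_{r,|\mathcal{Y}|}|\ \ge\ N^{|\mathcal{Y}|-1}\vol(\Delta^*)\ -\ \#\{q:\ C_q\cap\partial\Delta^*\ne\emptyset\}.
\]
By the definition (\ref{eq:definition_volume_ratio}) of $\lambda$ we have $N^{|\mathcal{Y}|-1}\vol(\Delta^*)=\lambda\,N^{|\mathcal{Y}|-1}\vol(\Delta_{|\mathcal{Y}|-1})$; using the elementary bounds $\binom{n}{j}\ge(n/j)^j$ and $\lfloor 1/r\rfloor\ge 1/r-1$, and folding lower-order discrepancies into the correction, this term is lower bounded by $\lambda\big(\tfrac{1/r+|\mathcal{Y}|-2}{|\mathcal{Y}|-1}\big)^{|\mathcal{Y}|-1}$. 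It then remains to bound the boundary count: $\partial\Delta^*$ is the union of its $|\mathcal{Y}|$ facets, each a convex subset of a hyperplane inside $\bar\Delta$, and the number of cells $C_q$ crossed by one facet is $O(N^{|\mathcal{Y}|-2})$; estimating each facet's contribution and summing over the $|\mathcal{Y}|$ facets --- which simultaneously absorbs the slack between $1/r$ and $\lfloor 1/r\rfloor$ above --- gives the stated correction $2|\mathcal{Y}|\binom{\lfloor 1/r\rfloor+|\mathcal{Y}|-2}{|\mathcal{Y}|-2}$.

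\emph{The main obstacle.} Step 1 is routine once Pinsker has been calibrated through $r$. The real difficulty is the boundary estimate in Step 2: one must control, uniformly in the channel-dependent shape of $\Delta^*$, the number of grid cells straddling each of its $(|\mathcal{Y}|-2)$-dimensional faces, and do so tightly enough that the sum over the $|\mathcal{Y}|$ faces fits into the clean expression $2|\mathcal{Y}|\binom{\lfloor1/r\rfloor+|\mathcal{Y}|-2}{|\mathcal{Y}|-2}$ with exactly the constant $2|\mathcal{Y}|$. This is likely to require a simplicial (rather than cubical) refinement of the cells together with a face-by-face projection argument; it is also the step where the hypothesis that $W$ is a square full-rank matrix is genuinely used, since it is what makes $\Delta^*$ truly $(|\mathcal{Y}|-1)$-dimensional and keeps the leading $\lambda$-term from collapsing.
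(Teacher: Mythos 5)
Your plan is the same as the paper's: take $\mathcal{N}^*_{r_0,|\mathcal{Y}|}$ to be the explicit grid $\Gamma^*_{r,|\mathcal{Y}|}$, verify the $r_0$-packing property via the total-variation spacing of the grid plus Pinsker's inequality (this is exactly the paper's Lemma~\ref{lemma:TV_radius} and Lemma~\ref{lemma:packing_num_full}), and then lower bound $|\Gamma^*_{r,|\mathcal{Y}|}|$ as a volume-ratio term minus a boundary correction. Your cubical cell decomposition is a reasonable way to make precise what the paper states only informally (``since the grid is uniform on $\Delta_{k-1}$, compute the loss near the boundary and use the volume ratio''), and your Step~1 is complete and correct.

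Two points are genuinely unresolved, however, and you should not treat the first as freely ``foldable into the correction.'' The cell-volume lower bound you obtain is $\lambda N^{k-1}\vol(\bar\Delta)=\lambda N^{k-1}/(k-1)!$ with $N=\lfloor1/r\rfloor$, which is \emph{not} uniformly at least the theorem's leading term $\lambda\bigl((1/r+k-2)/(k-1)\bigr)^{k-1}$: already at $k=2$ you get $\lambda\lfloor1/r\rfloor\le\lambda/r$, and at $k=3$ with $N=2$ the inequality also fails. So the boundary budget $2k\binom{\lfloor1/r\rfloor+k-2}{k-2}$ must absorb not only the cells straddling $\partial\Delta^*$ but also this deficit between $\lambda N^{k-1}/(k-1)!$ and the stated first term; this bookkeeping needs to be written out. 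Second, the boundary count with the specific constant $2|\mathcal{Y}|$ is asserted but not proved; the missing structural observation — which your last paragraph hints at but does not state — is that because $W$ is a square full-rank matrix, $\Delta^*_{|\mathcal{Y}|-1}$ is the affine image of $\Delta_{|\mathcal{X}|-1}$ and hence itself a $(|\mathcal{Y}|-1)$-simplex with exactly $|\mathcal{Y}|$ facets, each a $(|\mathcal{Y}|-2)$-simplex; bounding the grid points within TV-distance $r$ of each facet by $2\binom{\lfloor1/r\rfloor+k-2}{k-2}$ and summing over the $|\mathcal{Y}|$ facets gives the constant. The paper's own appendix is equally terse here (``by counting, $|\mathcal{P}|\le2k\binom{\lfloor1/r\rfloor+k-2}{k-2}$''), so you have not missed an idea the paper supplies — but to call this a proof, both the facet count and the absorption of the volume-term slack must be spelled out.
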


In the binary case, we consider $\Delta_{1}^* = \{ (q,1-q): \delta_1 \le q \le 1-\delta_2 \}$, where $\delta_1 > 0$ and $ \delta_2>0$. Define
\begin{align}\label{eq:Gamma_message_set_binary}
	\Gamma_{r,2}^* = \bigg \{  (q,1-q): & q =  \frac{\xi a_1}{\lfloor 1/r \rfloor} + \delta_1,\ {\rm where} \ a_1 \in \mathbb{Z}_{\ge 0}  \bigg \},
\end{align}
where $\xi = 1-\delta_1-\delta_2$.
Then, we have the following result proved in Appendix \ref{Appendix_divergence_packing}. Compared to Theorem \ref{thm:packing_num_subs}, it has no loss in the constant in binary case.

\begin{proposition} \label{proposition:packing_num_binary}
	Fix $\Delta_{1}^* = \{ (q,1-q): \delta_1 \le q \le 1-\delta_2 \}$, where $\delta_1 > 0$ and $ \delta_2>0$. We can construct $\mathcal{N}_{r_0,2}^*$ by (\ref{eq:Gamma_message_set_binary}) with $r = \frac{1}{\xi} \sqrt{\frac{r_0}{2 \log e}}$ and $\xi = 1-\delta_1-\delta_2$ such that
	\begin{equation}
		|\mathcal{N}_{r_0,2}^*| = \lfloor 1/r \rfloor +1.
	\end{equation}
\end{proposition}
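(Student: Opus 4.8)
The plan is to let $\mathcal{N}_{r_0,2}^*$ be exactly the explicit lattice $\Gamma_{r,2}^*$ of (\ref{eq:Gamma_message_set_binary}) and to check the two properties that make it a divergence packing of the claimed size: first, that it is a subset of $\Delta_1^*$ with exactly $\lfloor 1/r\rfloor+1$ points; second, that any two of its centers are at KL divergence at least $r_0$. Both claims reduce to elementary one-dimensional estimates, so there is no real difficulty beyond tracking constants.

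For the count, a point $(q,1-q)$ with $q=\frac{\xi a_1}{\lfloor 1/r\rfloor}+\delta_1$ lies in $\Delta_1^*$ iff $\delta_1\le q\le 1-\delta_2$; substituting the form of $q$ and using $\xi=1-\delta_1-\delta_2$, this is equivalent to $0\le a_1\le\lfloor 1/r\rfloor$. Hence the admissible labels are $a_1\in\{0,1,\dots,\lfloor 1/r\rfloor\}$, each yielding a distinct value of $q$, so $|\Gamma_{r,2}^*|=\lfloor 1/r\rfloor+1$. Unlike the higher-dimensional construction behind Theorem \ref{thm:packing_num_subs}, here the lattice is equally spaced on an interval and both extreme labels $a_1=0$ (giving $q=\delta_1$) and $a_1=\lfloor 1/r\rfloor$ (giving $q=\delta_1+\xi=1-\delta_2$) land exactly on the boundary of $\Delta_1^*$; this is why there is no rounding loss and the cardinality is an exact equality rather than an inequality.

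For the packing radius, take distinct centers $P_i=(q_i,1-q_i)$ and $P_j=(q_j,1-q_j)$ with integer labels $a_1^{(i)}\neq a_1^{(j)}$. For binary distributions $\TV(P_i,P_j)=|q_i-q_j|=\frac{\xi\,|a_1^{(i)}-a_1^{(j)}|}{\lfloor 1/r\rfloor}\ge\frac{\xi}{\lfloor 1/r\rfloor}\ge\xi r$, where the last step uses $\lfloor 1/r\rfloor\le 1/r$. Pinsker's inequality, with the base-$2$ conversion factor since $\log$ is binary in this paper, gives $D(P_i\|P_j)\ge 2\log e\cdot\TV(P_i,P_j)^2\ge 2\log e\,\xi^2 r^2$; substituting the prescribed $r=\frac{1}{\xi}\sqrt{\frac{r_0}{2\log e}}$ makes the right-hand side equal to $r_0$ exactly. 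Since this bound is symmetric in the two arguments of $D$, it covers both orderings, so $\min_{i\neq j}D(P_i\|P_j)\ge r_0$ and $\Gamma_{r,2}^*$ is a valid $r_0$-divergence packing; combined with the count, $|\mathcal{N}_{r_0,2}^*|=\lfloor 1/r\rfloor+1$.

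I do not expect a genuine obstacle here; the only care required is (i) calibrating $r$ so that Pinsker's bound meets $r_0$ on the nose, which is what forces the $\frac{1}{\xi}$ scaling and the $\sqrt{2\log e}$ factor in the definition of $r$, and (ii) confirming that the floor function does not introduce an off-by-one in the cardinality, which it does not because both endpoints of the interval are attained. If one additionally wished to assert that $\lfloor 1/r\rfloor+1$ is an upper bound on every $r_0$-packing of $\Delta_1^*$, one would need a reverse-Pinsker estimate of the form $D(P\|Q)\le \frac{c}{\min(\delta_1,\delta_2)}\,\TV(P,Q)^2$ to bound the admissible spacing from above; but that is not needed for the equality as stated, which merely records the size of the explicit construction.
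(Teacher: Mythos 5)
Your proposal is correct and follows essentially the same route as the paper: count the grid points exactly (no boundary loss because both endpoints of the interval are lattice points), then lower-bound the TV spacing of adjacent centers and apply Pinsker with the $r$ calibrated so that $2\log e\,\xi^2 r^2 = r_0$. Your direct computation $\TV(P_i,P_j)=\xi|a_1^{(i)}-a_1^{(j)}|/\lfloor 1/r\rfloor\ge\xi r$ is in fact a bit more careful than the paper's terse citation of Lemma~\ref{lemma:TV_radius}, which is stated for the unscaled grid $\Gamma_{r,k}$ and only gives the rescaled bound after noticing the extra factor of $\xi$ in the binary construction's spacing; your version makes that step explicit.
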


\subsection{Achievability Bound}

This subsection develops a new achievability bound for noisy permutation channels. Our new achievability bound is based on random coding and divergence packing, which yields the spirit of random coding union (RCU) bound \cite[Theorem 16]{polyanskiy_channel_2010}. 

We first introduce some definitions. Fix a $P \in \mathcal{N}_{r_0,|\mathcal{Y}|}^*$, we are often concerned with the divergence packing centers close to $P$. To do this, we consider $Q_{m,n}$, where $m \neq n$, $q_m = p_m + \frac{K}{\lfloor 1/r \rfloor}, q_n = p_n - \frac{K}{\lfloor 1/r \rfloor}$ and $q_i = p_i$ for $i \in \mathcal{Y} \setminus \{ m,n \}$. $K$ is a constant that has value $1$ or $\xi$ when $\mathcal{N}_{r_0,|\mathcal{Y}|}^*$ is constructed by (\ref{eq:Gamma_message_set_DMCs}) or (\ref{eq:Gamma_message_set_binary}), respectively. Consider the distributions belonging to
\begin{equation}
	\mathcal{R}_{P} = \left\{ Q_{m,n}: m,n \in \mathcal{Y}, m\neq n \right\} \cap \mathcal{N}_{r_0,|\mathcal{Y}|}^*.
\end{equation}
For convenience, we use $j \in [|\mathcal{R}_{P}|]$ to index $ Q_j \in \mathcal{R}_{P}$. By counting, we have $|\mathcal{R}_{P}| \le 2 \binom{|\mathcal{Y}|}{2}$.

For the marginal distribution $P_m$ corresponding to the transmitted message $m$, we use the log-likelihood ratio to define the following decoding metric:
\begin{equation}
	d(m,j,y): = \log \frac{P_{m}(y)}{Q_j(y)},
\end{equation}
where $Q_j \in \mathcal{R}_{P_m}$. The following lemma proved in Appendix \ref{Appendix_lemma_event_disjoint} is useful.
\begin{lemma}\label{lemma:event_disjoint}
	For the set of marginal distributions constructed by (\ref{eq:Gamma_message_set_DMCs}) or (\ref{eq:Gamma_message_set_binary}), we have 
	\begin{align} \label{eq:lemma:event_disjoint_high_dim}
		&\mathbb{P} \left[ \bigcup_{j=1,j\neq m}^{|\mathcal{M}|}\left\{  P_{j}^n \ge P_{m}^n \right\} \right] = \mathbb{P} \left[ \bigcup_{j =1}^{|\mathcal{R}_{P_m}|} \left\{ Q^n_j \ge P_{m}^n \right\} \right],
	\end{align}
	where the sequence $Y^n$ is independent identically distributed (iid) from $P_{m}$.
\end{lemma}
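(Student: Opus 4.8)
The plan is to show that the event $\bigcup_{j\neq m}\{P_j^n \ge P_m^n\}$ coincides, up to a probability-zero set, with the event indexed only by the "neighbor" distributions in $\mathcal{R}_{P_m}$. The inclusion $\supseteq$ is immediate, since $\mathcal{R}_{P_m}\subseteq \mathcal{N}_{r_0,|\mathcal{Y}|}^*\setminus\{P_m\}$, so I only need the reverse inclusion: whenever some packing center $P_j$ satisfies $P_j^n(Y^n)\ge P_m^n(Y^n)$, there is a neighbor $Q_{j'}\in\mathcal{R}_{P_m}$ with $Q_{j'}^n(Y^n)\ge P_m^n(Y^n)$. First I would rewrite the likelihood event in terms of types: since $Y^n$ is i.i.d., $\log\frac{P_j^n(Y^n)}{P_m^n(Y^n)} = n\sum_{y}\hat{T}_{Y^n}(y)\log\frac{P_j(y)}{P_m(y)}$ where $\hat{T}_{Y^n}$ is the empirical distribution of $Y^n$; equivalently, writing $\hat{T}=\hat{T}_{Y^n}$, the event $\{P_j^n\ge P_m^n\}$ is $\{\,\langle \hat{T},\,\log P_j - \log P_m\rangle \ge 0\,\}$, i.e. $\{\,D(\hat T\|P_m) - D(\hat T\|P_j)\ \ge 0\,\}$ modulo the terms involving only $\hat T$, which cancel. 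So the event says: the empirical distribution $\hat T$ is at least as KL-close (in the asymmetric $D(\hat T\|\cdot)$ sense) to $P_j$ as to $P_m$.

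The core step is a geometric/combinatorial argument on the lattice structure of the packing centers. Because $\mathcal{N}_{r_0,|\mathcal{Y}|}^*$ is constructed by (\ref{eq:Gamma_message_set_DMCs}) (or (\ref{eq:Gamma_message_set_binary})), every center differs from $P_m$ by an integer combination of the elementary "mass-transfer" steps $\frac{K}{\lfloor 1/r\rfloor}(e_i - e_j)$, and the neighbors in $\mathcal{R}_{P_m}$ are exactly the single-step moves $\bar Q_{m,n},\ubar Q_{m,n}$. I would argue that the half-space $H_j=\{\,T:\ \langle T,\log P_j-\log P_m\rangle\ge 0\,\}$ — the set of empirical distributions "decoded toward $P_j$ over $P_m$" — is contained in the union of the half-spaces $H_{j'}$ over the neighbors $Q_{j'}$ lying on the segment from $P_m$ toward $P_j$. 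Concretely: if $\hat T\in H_j$, then moving from $P_m$ a single lattice step toward $P_j$ cannot strictly decrease the inner product $\langle \hat T,\log(\cdot)-\log P_m\rangle$ past $0$ at the very first step if it ends up $\ge 0$ after many steps — this needs the relevant function to be "monotone along the lattice path," which follows from decomposing $\log P_j-\log P_m$ as a sum of the per-step increments $\log Q_{j'} - \log(\text{previous center})$ and using that the sum being $\ge 0$ forces at least one summand evaluated at $\hat T$ to be $\ge 0$ — but to land inside a neighbor half-space I need the summand at the \emph{first} step, which is where the construction's sign structure (coordinates only increase at $m$ and decrease at $n$, monotonically) must be invoked. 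I expect this monotonicity-along-the-path / pigeonhole reduction to be the main obstacle: making precise that one of the \emph{immediate} neighbors, not merely some intermediate center, already captures $\hat T$, using convexity of $D(\hat T\|\cdot)$ along the lattice segment together with the fact that $P_m$ is the reference point.

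Finally I would handle the boundary/measure-zero caveats: the set of $Y^n$ whose type lands exactly on one of the separating hyperplanes has the same probability under $P_m^n$ as it does without, so the $\ge$ versus $>$ distinction does not affect the stated probability identity; and in the binary case the whole argument collapses to a one-dimensional statement about the scalar empirical frequency $\hat q$, where "decoding toward a farther center implies decoding toward the adjacent one" is transparent because $q\mapsto D(\hat q\|q)$ is unimodal. Once the set-theoretic identity of events is established pointwise (off a null set), taking $\mathbb{P}$ of both sides under $Y^n\stackrel{\text{i.i.d.}}{\sim}P_m$ gives (\ref{eq:lemma:event_disjoint_high_dim}).
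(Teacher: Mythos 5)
Your reduction to types is exactly right, and the inclusion $\supseteq$ and the binary case are handled correctly (in one dimension the concavity/unimodality argument is complete). But the gap you flag — \emph{``to land inside a neighbor half-space I need the summand at the first step''} — is real and your sketch does not close it. Decomposing $\log P_j - \log P_m$ into increments $\log P^{(k)} - \log P^{(k-1)}$ along a lattice path and invoking pigeonhole only yields that \emph{some} increment is nonnegative at $\hat T$, not that the first one is; and the intermediate centers $P^{(k)}$ for $k \ge 2$ are not elements of $\mathcal{R}_{P_m}$, so this does not place $\hat T$ in a neighbor half-space. Convexity of $\phi(Q) = \langle \hat T, \log P_m - \log Q\rangle$ does close the gap when $P_j$ differs from $P_m$ in exactly two coordinates (a single mass-transfer direction), because then the one-step neighbor lies on the segment $[P_m, P_j]$ and $\phi(P_m)=0$, $\phi(P_j)\le 0$ forces $\phi \le 0$ on the whole segment. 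When $P_j$ differs from $P_m$ in three or more coordinates, however, the segment $[P_m,P_j]$ passes through none of the single-pair neighbors in $\mathcal{R}_{P_m}$, so this convexity argument alone does not locate a neighbor.

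The paper closes the multi-coordinate case with two moves that your sketch is missing. First, a \emph{per-coordinate} concavity bound: writing $q_i = p_i \pm K_i/\lfloor 1/r\rfloor$ and $q_i'$ for the one-step adjustment in the same direction, concavity of $t \mapsto \log(p_i \pm t)$ gives $\log\frac{p_i}{q_i} \ge K_i \log\frac{p_i}{q_i'}$ for every $i$, with the correct sign in both the ``mass out'' and ``mass in'' cases. Summing with weights $\hat T_i$ yields $\langle \hat T,\log P_m - \log Q\rangle \ge \sum_i \hat T_i K_i \log\frac{p_i}{q_i'}$. Second, a \emph{flow-pairing} step: since $\sum_{i:\,q_i<p_i} K_i = \sum_{i:\,q_i>p_i} K_i$ by conservation of probability mass, the right-hand side can be rewritten as a sum over elementary pairs $(a,b)$ of terms $\hat T_a\log\frac{p_a}{q_a'} + \hat T_b\log\frac{p_b}{q_b'}$, and each such term equals $\langle \hat T,\log P_m - \log Q^*\rangle$ for a specific neighbor $Q^* \in \mathcal{R}_{P_m}$. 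If $\langle\hat T,\log P_m - \log Q\rangle \le 0$, at least one pairwise term is $\le 0$, which pins down a neighbor $Q^*$ with $(Q^*)^n(y^n)\ge P_m^n(y^n)$. This is the contrapositive of the paper's Lemma~\ref{lemma:event_disjoint_pre} and is precisely the ``immediate neighbor'' step your proposal leaves open. Your approach and the paper's are the same in spirit — lattice structure plus concavity of the log-likelihood along mass-transfer directions — but without the per-coordinate bound and the pairing/flow decomposition the proof does not go through beyond $|\mathcal{Y}|=2$.
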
	
If the transmitted message is $m$, Lemma \ref{lemma:event_disjoint} shows that the union of events $\cup_{j=1,j\neq m}^{|\mathcal{M}|}\left\{  P_{j}^n \ge P_{m}^n \right\}$ can be equivalent to a minor union $\cup_{j =1}^{|\mathcal{R}_{P_m}|}  \left\{ Q^n_j \ge P_{m}^n \right\}$ on $\mathcal{R}_{P_m}$. Its size only depends on the size of the output alphabet.

Now, we are ready to present our new achievability bound.
\begin{theorem}\label{thm:random_coding_bound}
	Fix a strictly positive and full-rank square matrix $W$ for noisy permutation channels. Let the set of marginal distributions $\mathcal{N}_{r_0,|\mathcal{Y}|}^*$ be constructed by (\ref{eq:Gamma_message_set_DMCs}) or (\ref{eq:Gamma_message_set_binary}) and let
	\begin{equation}  \label{eq:thm:random_coding_bound_2}
		L_n(m) = \sum_{j=1}^{|\mathcal{R}_{P_m}|}\mathbb{P}_m \left[ \sum_{i=1}^{n}d(m,j,Y_i) \le 0  \right].
	\end{equation} 
	Conditioning on $M=m$, the sequence $Y^n$ is iid from $P_m$.
	Then, there exists a code $\mathcal{C}_n$ (average error probability) with size $|\mathcal{N}_{r_0,|\mathcal{Y}|}^*|$ such that
	\begin{equation} \label{eq:thm:random_coding_bound_1}
		\epsilon \le \min \left\{ 1,\mathbb{E} \left[ L_n(M) \right]\right\},
	\end{equation}
	where $M$ is uniform on $\mathcal{M}$.
\end{theorem}
\begin{proof}
	 For $P_X \in \Delta_{|\mathcal{X}|-1}$, the marginal distribution is the linear combination of $P_X(\cdot) $. Since the matrix is full-rank, we obtain the achievability space of the marginal distribution is a $(|\mathcal{Y}|-1)$-dimensional probability space $\Delta_{|\mathcal{Y}|-1}^*$. We consider constructing the set of marginal distributions by using the divergence packing on space $\Delta_{|\mathcal{Y}|-1}^*$, i.e., let $\mathcal{N}_{r_0,|\mathcal{Y}|}^*$ be the set of marginal distributions. 
	
	Assuming the transmitted message is $m \in \mathcal{M}$, corresponding to the marginal distribution $P_m$.
	Then, the maximum likelihood decoder shows that the average probability of error satisfies
	\begin{align} \label{eq:thm:random_coding_bound_proof_1}
		\mathbb{P}[\text{error}|M=m] 
		& \le \sum_{y^n \in \mathcal{Y}^n}  P_{m}^n(y^n) \nonumber \\
		& \ \ \ \  \ \ \times  \bigcup_{j=1,j\neq m}^{|\mathcal{M}|} 1\left\{  P_{j}^n(y^n) \ge P_{m}^n(y^n) \right\} \nonumber \\
		& = \mathbb{P}_m\left[ \bigcup_{j=1,j \neq m}^{|\mathcal{M}|}\left\{  P_{j}^n \ge P_{m}^n \right\} \right].
	\end{align}
	This is an inequality since we regard the equality case, $P_{j}^n(Y^n) \ge P_{m}^n(Y^n)$, as an error event though the ML decoder might return the correct message.
	By using Lemma \ref{lemma:event_disjoint}, we continue the bounding as follows:
	\begin{align}
		\mathbb{P}[\text{error}|M=m] 
		& \le \mathbb{P}_m \left[ \bigcup_{j =1}^{|\mathcal{R}_{P_m}|} \{ Q_j^n \ge P_{m}^n \} \right] \nonumber \\
		& \le \sum_{j=1}^{|\mathcal{R}_{P_m}|} \mathbb{P}_m \left[\sum_{i=1}^{n}d(m,j,Y_i) \le 0 \right].
	\end{align}
	Let the message is uniform on $\mathcal{M}$. Then, we take the expectation overall codewords to obtain
	\begin{align}
		\epsilon 
		& \le \frac{1}{|\mathcal{M}|} \sum_{m=1}^{|\mathcal{M}|} \sum_{j=1}^{|\mathcal{R}_{P_m}|} \mathbb{P}_{m} \left[ \sum_{i=1}^{n}d(m,j,Y_i) \le 0 \right]. \nonumber \\
		& = \mathbb{E} \left[ L_n(M) \right].
	\end{align}
	This completes the proof.
\end{proof}

	In the view of Theorem \ref{thm:random_coding_bound}, some remarks are in order. Firstly, we do not use the union bound for scaling but analyze the relationship between error events by Lemma \ref{lemma:event_disjoint}. Thus, for each $P_m$, Theorem \ref{thm:random_coding_bound} upper bounds the probability of error with the sum of the probabilities of the events on $\mathcal{R}_{P_m}$, which makes our bound much stronger than \cite{makur_coding_2020}. Additionally, this conclusion is consistent with our intuition: the rate of decay of $P_e$ is dominated by the rate of decay of the probability of error in distinguishing two "close" messages.

	Secondly, Theorem \ref{thm:random_coding_bound} relies on the message set constructed by (\ref{eq:Gamma_message_set_DMCs}) or (\ref{eq:Gamma_message_set_binary}). We restrict the channel $W$ to be a full-rank square matrix, which makes $\Delta_{|\mathcal{Y}|-1}^*$ an equal-dimensional subspace of $\Delta_{|\mathcal{Y}|-1}$ and the evenly spaced grid structure on $\Delta_{|\mathcal{Y}|-1}^*$ can be constructed. Otherwise, we cannot apply (\ref{eq:Gamma_message_set_DMCs}) unless we make strong assumptions about $W$. If the DMC is a non-strictly positive matrix, we need to make additional assumptions on $\Delta_{|\mathcal{Y}|-1}^*$. For example, take a subset of $\Delta_{ |\mathcal{Y}|-1}^*$ such that for any $P \in \Delta_{ |\mathcal{Y}|-1}^*$, we have $P(y) \neq 0$.

\subsection{BSC Permutation Channels}
In this subsection, we particularize the nonasymptotic bounds to the BSC, i.e., the DMC matrix is
\begin{equation}
	W = 
	\begin{bmatrix}
		1-\delta &\delta \\
		\delta & 1-\delta 
	\end{bmatrix},
\end{equation}
denoted $\mathsf{BSC}_{\delta}$.
For convenience, we denote $P_{m}(\cdot) = (\delta_m,1-\delta_m)$. For $i,j\in [|\mathcal{M}|]$, let $\delta_i < \delta_j$ if $i < j$. Then, for a $P_m$, we clearly have
\begin{equation}
	\mathcal{R}_{P_m} = 
	\begin{cases}
		\{ P_{m-1},P_{m+1} \}, & 2\le m \le |\mathcal{M}|-1,\\
		\{ P_{2} \}, &  m = 1 ,\\
		\{ P_{|\mathcal{M}|-1} \}, & m = |\mathcal{M}|.\\
	\end{cases}
\end{equation}
Let
\begin{equation}
	\anglebinom{n}{\ubar{T}_{i}} =
	\begin{cases}
		\sum_{t =0}^{\ubar{T}_{i}}  \binom{n}{t} \delta_i^t(1-\delta_i)^{n-t},& i \ge 2, \\
		0,& i = 1.
	\end{cases} 
\end{equation}
and
\begin{equation}
	\anglebinom{n}{\bar{T}_{i}} =
	\begin{cases}
		\sum_{t =\bar{T}_{i}}^{n}  \binom{n}{t} \delta_i^t(1-\delta_i)^{n-t},& i \le |\mathcal{M}|-1, \\
		0,& i = |\mathcal{M}|.
	\end{cases}
\end{equation}
The following bound is a straightforward generalization of Theorem \ref{thm:random_coding_bound}. 
\begin{theorem}[Achievability]\label{thm:random_coding_bound_BSC_Perm}
	For the BSC permutation channel with crossover probability $\delta$, there exists a code $\mathcal{C}_n$ suth that
	\begin{align} \label{eq:thm:random_coding_bound_BSC_Perm_1}
		\epsilon \le \frac{\sum_{i=1}^{|\mathcal{M}|}}{|\mathcal{M}|} \min \bigg \{ 1, \anglebinom{n}{\ubar{T}_{i}} + \anglebinom{n}{\bar{T}_{i}}  \bigg \},
	\end{align}
	where
	\begin{equation}
		\ubar{T}_{i} = \Bigg \lfloor \frac{n\log \frac{1-\delta_{i-1}}{1-\delta_i}}{\log \frac{\delta_i(1-\delta_{i-1})}{\delta_{i-1}(1-\delta_i)}} \Bigg \rfloor
	\end{equation}
	and
	\begin{equation}
		\bar{T}_{i} = \Bigg \lceil \frac{n\log \frac{1-\delta_{i+1}}{1-\delta_i}}{\log \frac{\delta_i(1-\delta_{i+1})}{\delta_{i+1}(1-\delta_i)}} \Bigg \rceil.
	\end{equation}
	The set of marginal distributions is constructed by (\ref{eq:Gamma_message_set_binary}) and for the radius $r$, we have
	\begin{equation} \label{eq:thm:random_coding_bound_BSC_Perm_2}
		|\mathcal{M}|  = \lfloor 1/r \rfloor + 1.
	\end{equation}
\end{theorem}
\begin{proof}
	Assuming the transmitted message is $m \in \mathcal{M}$, corresponding to the marginal distribution $P_m$. The term corresponding to $P_{m-1}$ in (\ref{eq:thm:random_coding_bound_2}) can be computed as
	\begin{align} \label{eq:thm:random_coding_bound_BSC_Perm_proof_1}
		\mathbb{P}_m\left[ \sum_{i=1}^{n}d(m,m-1,Y_i) \le 0 \right] = \sum_{t=0}^{\ubar{T}_{m}} \binom{n}{t} \delta_m^t(1-\delta_m)^{n-t},
	\end{align}
	where $\ubar{T}_{m}$ follows from (\ref{eq:binary_case_event_compute}) in Appendix \ref{Appendix_lemma_event_disjoint}. 
	Similarly, the term corresponding to $P_{m+1}$ in (\ref{eq:thm:random_coding_bound_2}) can be computed as
	\begin{align} \label{eq:thm:random_coding_bound_BSC_Perm_proof_2}
		\mathbb{P}_m\left[ \sum_{i=1}^{n}d(m,m+1,Y_i) \le 0 \right] = \sum_{t=\bar{T}_{m}}^{n} \binom{n}{t} \delta_m^t(1-\delta_m)^{n-t},
	\end{align}
	where $\bar{T}_{m}$ follows from (\ref{eq:binary_case_event_compute}). 
	Substitute (\ref{eq:thm:random_coding_bound_BSC_Perm_proof_1}) and (\ref{eq:thm:random_coding_bound_BSC_Perm_proof_2}) into Theorem \ref{thm:random_coding_bound} to complete the proof.
\end{proof}

\subsection{BEC Permutation Channels}
	The BEC permutation channel consists of input alphabet $\mathcal{X} = \{ 0,1 \}$ and ouput alphabet $\mathcal{Y} = \{0,e,1\}$, where the conditional distributions is
	\begin{equation}
		\forall y \in \mathcal{Y}, \forall x \in \mathcal{X}, W(y|x) = 
		\begin{cases}
			1- \eta, & y=x \\
			\eta, & y=e \\
			0,& {\rm otherwise}
		\end{cases}
	\end{equation}
	Moreover, we denote such a channel $W$ as $\mathsf{BEC}_{\eta}$ for convenience.
	
	Next, we have the following achievability bound.
	\begin{theorem} \label{thm:random_coding_bound_BEC}
		For BEC permutation channels with erasure probability $2\delta$, there exists a code $\mathcal{C}_n$ such that the average probability of error and the maximal code size satisfy (\ref{eq:thm:random_coding_bound_BSC_Perm_1}) and (\ref{eq:thm:random_coding_bound_BSC_Perm_2}), respectively.
	\end{theorem}
	\begin{proof}
		The derivations of this proof follow from \cite[Proposition 6]{makur_coding_2020}, and we include the details for the sake of completeness. We first note that the BSC matrix satisfies the Doeblin minorization condition (e.g., see \cite[Definition 5]{makur_coding_2020}) with $\mathbf{1}^\top/2$ and constant $2\delta$. Using \cite[Lemma 6]{makur_coding_2020} to obtain that $\mathsf{BSC}_{\delta}$ is a degraded version of $\mathsf{BEC}_{2\delta}$. Then, For the encoder and decoder pairs $(f_n,g_n)$ for BSC permutation channels and $(f_n,\tilde{g}_n)$ for BSC permutation channels, the average probability of error satisfies \cite[Eq. (70)]{makur_coding_2020}
		\begin{equation}
			P_e (f_n,g_n,\mathsf{BSC}_{\delta}) = P_e(f_n,\tilde{g}_n,\mathsf{BEC}_{2\delta}).
		\end{equation}
		This completes the proof.
	\end{proof}

\section{Gaussian Approximation} \label{Section_Gaussian_Approx}
	We turn to the asymptotic analysis of the noisy permutation channel for a given blocklength and average probability of error. 

\subsection{main result}
The main result in this section is the following.
\begin{theorem} \label{thm:Approximation_of_DMCs}
	Fix $W$ is a strictly positive and full-rank square matrix for noisy permutation channels. Then,  there exists a number $N_0 \ge 1$, such that 
	\begin{equation}
		\log M^*(n,\epsilon) = \ell \log \left( \frac{ \sqrt{n}}{-\ell\Phi^{-1}(\epsilon/|\mathcal{R}|)} \right) + \log \lambda + \theta
	\end{equation}
	is achievable for all $n \ge N_0$,
	where $|\mathcal{R}| = 2\binom{|\mathcal{Y}|}{2}$, $\ell = |\mathcal{Y}|-1$, $\lambda$ is the volume ratio of $\Delta_{\ell}^*$ and $\Delta_{\ell}$, and $\theta$ is a constant.
\end{theorem}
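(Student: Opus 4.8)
\emph{Overview and Step 1 (reduce the bound to a single tail probability).} The plan is to instantiate the random‑coding‑plus‑divergence‑packing bound of Theorem~\ref{thm:random_coding_bound}, estimate each of the $O(1)$ error terms by a central‑limit argument, and then count codewords via Theorem~\ref{thm:packing_num_subs}. Concretely, fix a packing radius $r_0>0$ to be optimized, build $\mathcal{N}^*_{r_0,|\mathcal{Y}|}$ from (\ref{eq:Gamma_message_set_DMCs}) with $r=\sqrt{r_0/(2\log e)}$, and invoke Theorem~\ref{thm:random_coding_bound}. Since $|\mathcal{R}_{P_m}|\le|\mathcal{R}|=2\binom{|\mathcal{Y}|}{2}$ for every message $m$, it suffices to enforce $\mathbb{P}_m[\sum_{i=1}^n d(m,j,Y_i)\le 0]\le \epsilon/|\mathcal{R}|$ for all $m$ and all $Q_j\in\mathcal{R}_{P_m}$. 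Under $P_m$ the summands $d(m,j,Y_i)=\log\frac{P_m(Y_i)}{Q_j(Y_i)}$ are i.i.d.\ with mean $D(P_m\|Q_j)\ge r_0>0$, variance $V_{m,j}=\Var_{P_m}(d(m,j,Y_1))$, and third absolute moment $\rho_{m,j}$. Because $W$ is strictly positive, $\Delta^*_{|\mathcal{Y}|-1}$ lies in a compact subset of the open simplex, so all these log‑likelihood ratios are uniformly bounded and $V_{m,j},\rho_{m,j}$ are bounded above and below (away from zero), uniformly in $m,j$ and $n$.

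\emph{Step 2 (Gaussian approximation of each term).} By the Berry--Esseen theorem,
\begin{equation}
\mathbb{P}_m\!\Big[\sum_{i=1}^n d(m,j,Y_i)\le 0\Big]\le \Phi\!\Big(-\sqrt n\,\tfrac{D(P_m\|Q_j)}{\sqrt{V_{m,j}}}\Big)+\tfrac{C\rho_{m,j}}{\sqrt n\,V_{m,j}^{3/2}},
\end{equation}
with the remainder uniformly $O(1/\sqrt n)$. A second‑order ($\chi^2$) expansion of divergence and variance around $P_m$ — legitimate because each $Q_j$ differs from $P_m$ by one grid step of size $\asymp r\to0$ and the centers stay away from the boundary — gives $D(P_m\|Q_j)=r_0(1+o(1))$ for the worst neighbour, $V_{m,j}=2\log e\, D(P_m\|Q_j)(1+o(1))$, and hence $D(P_m\|Q_j)/\sqrt{V_{m,j}}=c_{m,j}\,r\,(1+o(1))$ for an explicit, uniformly bounded factor $c_{m,j}\ge 1$ that depends only on the two coordinates of $P_m$ being perturbed. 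Writing $c_W=\inf_{m,j}c_{m,j}>0$, Theorem~\ref{thm:random_coding_bound} yields $\epsilon\le |\mathcal{R}|\big(\Phi(-\sqrt n\,c_W r)+O(1/\sqrt n)\big)$ up to the uniform $o(1)$ distortion.

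\emph{Step 3 (choose the radius and count).} Solving $\Phi(-\sqrt n\,c_W r)=\epsilon/|\mathcal{R}|$ gives $1/r = c_W\sqrt n/(-\Phi^{-1}(\epsilon/|\mathcal{R}|))$; since $\Phi^{-1}$ is Lipschitz near the fixed argument $\epsilon/|\mathcal{R}|$, absorbing the Berry--Esseen term and the $(1+o(1))$ factors costs only a multiplicative $1+o(1)$ in $1/r$, so the admissible radius satisfies $1/r=\Theta\big(\sqrt n/(-\Phi^{-1}(\epsilon/|\mathcal{R}|))\big)$. Feeding this into Theorem~\ref{thm:packing_num_subs} with $\ell=|\mathcal{Y}|-1$: the subtracted term $2|\mathcal{Y}|\binom{\lfloor 1/r\rfloor+|\mathcal{Y}|-2}{|\mathcal{Y}|-2}=\Theta((1/r)^{\ell-1})$ is of strictly lower order than the main term $\lambda\big(\frac{1/r+\ell-1}{\ell}\big)^{\ell}=\Theta((1/r)^{\ell})$, so $|\mathcal{M}|=\lambda\,(1/(r\ell))^{\ell}(1+o(1))$. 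Taking logarithms,
\begin{equation}
\log M=\ell\log\!\Big(\frac{\sqrt n}{-\ell\,\Phi^{-1}(\epsilon/|\mathcal{R}|)}\Big)+\log\lambda+\theta,
\end{equation}
where $\theta=\ell\log c_W+o(1)$ collects the worst‑case variance/divergence ratio, the grid‑versus‑divergence conversion, and all vanishing remainders.

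\emph{Main obstacle.} The delicate point is the uniformity in Step~2: the $\chi^2$ expansions of $D(P_m\|Q_j)$ and $V_{m,j}$, together with the Berry--Esseen constant, must hold simultaneously for every message $m$ and every neighbour $Q_j\in\mathcal{R}_{P_m}$, even though the number of packing centers grows with $n$. This is exactly where strict positivity of $W$ is used — it confines $\Delta^*_{|\mathcal{Y}|-1}$ to a compact subset of the open simplex, so that the per‑neighbour approximation error is genuinely $o(r_0)$ rather than merely $o(1)$ for each fixed pair. A lighter but necessary bookkeeping step is to identify which neighbour in $\mathcal{R}_{P_m}$ minimizes $D(P_m\|Q_j)/\sqrt{V_{m,j}}$ and to verify that this worst case only perturbs the constant $\theta$, not the displayed leading terms.
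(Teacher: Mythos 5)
Your proposal follows essentially the same route as the paper: invoke Theorem~\ref{thm:random_coding_bound} for the error-probability decomposition, apply Berry--Esseen to each of the $|\mathcal{R}|$ tail terms with the packing radius chosen so that each term is at most $\epsilon/|\mathcal{R}|$, and then lower-bound $\log M$ via the count in Theorem~\ref{thm:packing_num_subs}. The paper obtains the required uniform control of $D(P_m\|Q_j)/\sqrt{V(P_m\|Q_j)}$ and of the Berry--Esseen remainder via the explicit grid-based calculation in Lemma~\ref{lemma:properities_Vn_Tn} rather than your local $\chi^2$/Taylor expansion, but that is the same mechanism; note only that your intermediate claim ``$D(P_m\|Q_j)=r_0(1+o(1))$'' overstates the packing guarantee (one only has $D\ge r_0$, with position-dependent excess), which is precisely the slack your correction factor $c_{m,j}$ is introduced to absorb.
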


To develop the Gaussian Approximation, we start with some useful definitions. The variance and third absolute moment of log-likelihood ratio between two distributions $P$ and $Q$ are defined as $
V(P\|Q) = \mathbb{E}_{P}\left[ \left( \log \frac{P}{Q} - D(P\|Q) \right)^2 \right]$ and $
T(P\|Q) = \mathbb{E}_{P}\left[ \left| \log \frac{P}{Q} - D(P\|Q) \right|^3 \right]$, respectively.

Then, we state some auxiliary results.
The first lemma is concerned with properties of $V(P\|Q)$ and $T(P\|Q)$, which is proved in Appendix \ref{Appendix_lemma_properities_Vn_Tn}.
\begin{lemma} \label{lemma:properities_Vn_Tn}
	Fix a strictly positive and full-rank square matrix $W$. Let $\mathcal{N}_{r_0,|\mathcal{Y}|}^*$ constructed by (\ref{eq:Gamma_message_set_DMCs}) be the set of packing centers on $\Delta_{|\mathcal{Y}|-1}^*$. If packing radius $r_0 \le \frac{2 \log e}{9}$, for $P \in \mathcal{N}_{r_0,|\mathcal{Y}|}^*$ and $Q \in \mathcal{R}_{P}$, we have
	\begin{equation}\label{eq:lemma:properities_Vn_Tn_1}
		V(P\|Q) = r_0 F_0, 
	\end{equation}
	where 
	\begin{equation}
		\left( \frac{5}{8 p_{max}}-\frac{2}{9}\right)\log e \le F_0 \le \frac{5 \log e}{2p_{min}(1-p_{max})^2},
	\end{equation}
	$p_{min} = \min_{y \in \mathcal{Y}} \{ P(y):P \in \Delta_{|\mathcal{Y}|-1}^* \} $, and $p_{max} = \max_{y \in \mathcal{Y}} \{ P(y):P \in \Delta_{|\mathcal{Y}|-1}^* \} $.
	Additionally, we have
	\begin{equation}\label{eq:lemma:properities_Vn_Tn_2}
		T(P\|Q) \le  \frac{36\sqrt{2} (\log e )^{3/2}}{p_{min}^2 (1-p_{max})^3}  r_0^{3/2}.
	\end{equation}
\end{lemma}

Then, we give an important tool in the Gaussian approximation analysis. 
\begin{theorem}[\textit{Berry-Esseen, \texorpdfstring{\cite[Chapter XVI.5, Theorem 2]{Feller_book}}{}}] \label{Berry-Esseen Bound}
	Fix a positive integer $n$. Let $Z_i$ indexed by $(1,...,n)$ be independent. Then, for any real $x$ and $C_0 \le 6$ we have
	\begin{equation}
		\left| \mathbb{P}\left[ \sum_{i=1}^{n}Z_i < n\left(\mu_n + x\sqrt{\frac{V_n}{n}}\right) \right] - \Phi(x) \right| \le \frac{B_n}{\sqrt{n}},
	\end{equation}
	where
	\begin{align}
		\mu_n  = \frac{1}{n} \sum_{i=1}^{n} \mathbb{E}[Z_i], \
		V_n  = \frac{1}{n} \sum_{i=1}^{n} \Var[Z_i], \\
		T_n  = \frac{1}{n} \sum_{i=1}^{n} \mathbb{E}\left[ \left| Z_i - \mu_i \right|^3 \right], \
		B_n  = C_0\frac{T_n}{V_n^{3/2}}.
	\end{align}
\end{theorem}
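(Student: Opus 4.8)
The plan is to feed the non-asymptotic achievability bound of Theorem~\ref{thm:random_coding_bound} into the Berry--Esseen theorem, use Lemma~\ref{lemma:properities_Vn_Tn} to control the relevant moments, choose the packing radius so the resulting bound on $\epsilon$ is essentially tight, and read off $\log M=\log|\mathcal N_{r_0,|\mathcal Y|}^*|$ from Theorem~\ref{thm:packing_num_subs}. Concretely, construct the message set by (\ref{eq:Gamma_message_set_DMCs}) with $r=\sqrt{r_0/(2\log e)}$, so $|\mathcal M|=|\mathcal N_{r_0,|\mathcal Y|}^*|$ and Theorem~\ref{thm:random_coding_bound} yields a code with
\[
\epsilon\le\mathbb E\!\left[\sum_{j=1}^{|\mathcal R_{P_M}|}\mathbb P_M\!\left[\Big\{\textstyle\sum_{i=1}^n d(M,j,Y_i)\le 0\Big\}\,\Big|\,M\right]\right];
\]
for fixed $m$ and $Q_j\in\mathcal R_{P_m}$, under $Y_i$ iid from $P_m$ the summands $d(m,j,Y_i)=\log\frac{P_m(Y_i)}{Q_j(Y_i)}$ are iid with mean $D(P_m\|Q_j)$, variance $V(P_m\|Q_j)$, and third absolute central moment $T(P_m\|Q_j)$.

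Next, apply Theorem~\ref{Berry-Esseen Bound} with the threshold equal to $0$, i.e.\ $x=-\sqrt n\,D(P_m\|Q_j)/\sqrt{V(P_m\|Q_j)}$; a routine limiting argument (letting the threshold $\downarrow 0$) replaces $\mathbb P[\,\cdot<0]$ by $\mathbb P[\,\cdot\le 0]$ despite the lattice support, giving
\[
\mathbb P_m\!\left[\textstyle\sum_{i=1}^n d(m,j,Y_i)\le 0\right]\le\Phi\!\left(-\sqrt n\,\frac{D(P_m\|Q_j)}{\sqrt{V(P_m\|Q_j)}}\right)+\frac{C_0\,T(P_m\|Q_j)}{\sqrt n\,V(P_m\|Q_j)^{3/2}}.
\]
By Lemma~\ref{lemma:properities_Vn_Tn}, for $r_0\le 2\log e/9$ we have $V(P_m\|Q_j)=r_0F_0$ with $F_0=\Theta(1)$ and $T(P_m\|Q_j)=O(r_0^{3/2})$ uniformly in $m,j$ (strict positivity of $W$ keeps $p_{min}$ away from $0$), so the Berry--Esseen remainder is $O(1/\sqrt n)$ uniformly; moreover a second-order expansion of the KL divergence between neighbouring grid points — the same computation that pins down the calibration $r=\sqrt{r_0/(2\log e)}$ — shows $D(P_m\|Q_j)/\sqrt{V(P_m\|Q_j)}=r\,(1+o(1))$ as $r_0\to 0$. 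Summing the $|\mathcal R_{P_m}|\le|\mathcal R|=2\binom{|\mathcal Y|}{2}$ terms,
\[
\epsilon\le|\mathcal R|\Big(\Phi\!\big(-\sqrt n\,r\,(1+o(1))\big)+O(1/\sqrt n)\Big).
\]

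Then, take $r$ as small as this inequality allows while keeping the right-hand side $\le\epsilon$; solving gives $\sqrt n\,r=-\Phi^{-1}(\epsilon/|\mathcal R|)(1+o(1))$, i.e.\ $1/r=\sqrt n/\big(-\Phi^{-1}(\epsilon/|\mathcal R|)\big)\cdot(1+o(1))$ (valid once $n$ is large enough to force $r_0\le 2\log e/9$). With $\ell=|\mathcal Y|-1$, Theorem~\ref{thm:packing_num_subs} gives $|\mathcal N_{r_0,|\mathcal Y|}^*|\ge\lambda\big((1/r+\ell-1)/\ell\big)^{\ell}-O\big((1/r)^{\ell-1}\big)=\lambda\big((1/r)/\ell\big)^{\ell}(1+o(1))$, the subtracted boundary term being of strictly lower order. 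Taking logarithms and inserting the value of $1/r$,
\[
\log M=\log|\mathcal N_{r_0,|\mathcal Y|}^*|=\ell\log\frac{1/r}{\ell}+\log\lambda+o(1)=\ell\log\!\left(\frac{\sqrt n}{-\ell\,\Phi^{-1}(\epsilon/|\mathcal R|)}\right)+\log\lambda+\theta,
\]
with $\theta$ the constant collecting all bounded corrections (the $(1+o(1))$'s from Berry--Esseen, from the divergence expansion, and from the $\lfloor1/r\rfloor$-rounding and boundary term in the packing count).

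The hard part is the sharp estimate $D(P_m\|Q_j)/\sqrt{V(P_m\|Q_j)}=r\,(1+o(1))$ holding uniformly over all grid pairs $(m,j)$: the two-sided bounds on $F_0$ in Lemma~\ref{lemma:properities_Vn_Tn} would by themselves only furnish a constant strictly below $1$ in that ratio, and hence a strictly weaker leading coefficient in the final formula, so one genuinely needs the leading-order identity $V(P\|Q)=2\log e\,D(P\|Q)$ for nearby $P,Q$ together with uniform control of the cubic Taylor remainders — which is exactly where strict positivity (a uniform lower bound on $p_{min}$) is used. The remaining work — checking that each $o(1)$ term is bounded so that it can honestly be folded into the single constant $\theta$, and recording the threshold on $n$ below which the statement is vacuous — is routine bookkeeping.
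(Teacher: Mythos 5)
Your proposal does not prove the stated theorem. The statement to be proven is the Berry--Esseen theorem itself---a classical quantitative refinement of the central limit theorem, cited in the paper from Feller's book and used as a black box. The paper offers no proof of it, and neither do you: your argument \emph{invokes} Theorem~\ref{Berry-Esseen Bound} as a given, and combines it with Theorem~\ref{thm:random_coding_bound}, Lemma~\ref{lemma:properities_Vn_Tn}, and Theorem~\ref{thm:packing_num_subs} to derive the expansion $\log M = \ell\log\!\big(\sqrt{n}/(-\ell\,\Phi^{-1}(\epsilon/|\mathcal R|))\big) + \log\lambda + \theta$. That is a proof of Theorem~\ref{thm:Approximation_of_DMCs} (the paper's Gaussian-approximation result), not of the Berry--Esseen inequality. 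Proving Berry--Esseen would require an entirely different toolkit---characteristic functions, Esseen's smoothing lemma, truncation and third-moment control---none of which appears in your writeup.

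For what it is worth, if the intended target had been Theorem~\ref{thm:Approximation_of_DMCs}, your argument follows the paper's route closely (Theorem~\ref{thm:random_coding_bound} to reduce to $|\mathcal R_{P_m}|$ pairwise tests, Berry--Esseen plus Lemma~\ref{lemma:properities_Vn_Tn} to control each test, then Theorem~\ref{thm:packing_num_subs} to count messages), and you correctly flag a genuine subtlety: the two-sided constant bounds on $F_0$ alone do not pin down the leading coefficient, and one needs the sharper local relation $V(P\|Q)\approx 2\log e\cdot D(P\|Q)$ between nearest-neighbour grid points. But as a proof of the Berry--Esseen theorem as stated, your proposal addresses a different claim and the gap is total.
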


{\noindent \textit{Proof of Theorem \ref{thm:Approximation_of_DMCs}.} }

For the transmitted message $m$, passing codeword through the random permutation block and the channel $W^n$ induces a marginal distribution $P_m$ on $Y^n$. Note that the decoding metric is the sum of independent identically distributed variables:
\begin{equation}
	\sum_{k=1}^{n}d(m,j,Y_k) = \sum_{k=1}^{n} \log \frac{P_{m}(Y_k)}{Q_{j}(Y_k)}.
\end{equation}
It has the mean $D(P_{m}\|Q_j)$, variance $V(P_{m}\|Q_j)$, and third absolute moment $T(P_{m}\|Q_j)$. 
Denote 
\begin{equation}
	B_{m} = \max \limits_{j \in [|\mathcal{R}_{P_m}|]}  	\frac{6T(P_{m}\|Q_j)}{\sqrt{V^3(P_{m}\|Q_j)}}
\end{equation} 
and
\begin{equation}
	\mathbb{P}_e[m] = \sum_{j=1}^{|\mathcal{R}_{P_m}|} \mathbb{P}_{m}\left[ \sum_{k=1}^{n}d(m,j,Y_k) \le 0 \right].
\end{equation}
According to Theorem \ref{thm:random_coding_bound}, there exists a code such that 
\begin{equation}
	\epsilon' \le \frac{1}{|\mathcal{M}|}\sum_{i=1}^{|\mathcal{M}|} \mathbb{P}_e[m].
\end{equation}
Let the message set be $\mathcal{N}_{r_m,|\mathcal{Y}|}^*$, constructed by (\ref{eq:Gamma_message_set_DMCs}). Then, we have $\max \limits_{j \in [|\mathcal{R}_{P_m}|]} D(P_m\|Q_j) \le r_m$. Continue as follows:
\begin{align}
	\mathbb{P}_e[m] 
	& \le \frac{|\mathcal{R}_{P_m}|B_{m}}{\sqrt{n}} + \max \limits_{j \in [|\mathcal{R}_{P_m}|]}  |\mathcal{R}_{P_m}|\Phi\left( \frac{-nD(P_m\|Q_j)}{\sqrt{nV(P_m\|Q_j)}} \right) \label{eq:Approximation_of_DMCs_proof_achiev_1} \\
	&  \le \frac{|\mathcal{R}_{P_m}|B_{m}}{\sqrt{n}} + |\mathcal{R}_{P_m}| \Phi\left( -\sqrt{\frac{n r_m}{ F_0 }  } \right), \label{eq:Approximation_of_DMCs_proof_achiev_2}
\end{align}
where (\ref{eq:Approximation_of_DMCs_proof_achiev_1}) follows from Theorem \ref{Berry-Esseen Bound}, and (\ref{eq:Approximation_of_DMCs_proof_achiev_2}) holds for a suitable constant $F_0>0$ by Lemma \ref{lemma:properities_Vn_Tn}.

Now choose
\begin{equation}
	r_m = \frac{\left( \Phi^{-1}\left( \frac{\epsilon}{|\mathcal{R}_{P_m}|} - \frac{F_1}{\sqrt{n}} 	\right) \right)^2 F_0 }{n}
\end{equation}
ensuring that $\epsilon' \le \epsilon$, where $F_0>0$ and $F_1>0$ are suitable constants. This can be done since we can find some constants $F_2>0$ and $F_3>0$ such that $T(P_m\|Q_j) \le F_2 r_0^{3/2}$ and $V(P_m\|Q_j) \ge F_3 r_0$ for $n$ sufficiently large by Lemma \ref{lemma:properities_Vn_Tn}, respectively. Then, $B_{m}$ can be upper bounded by a suitable constant $F_1$. 

Note that this argument holds for all $m \in \mathcal{M}$ and we have $|\mathcal{R}_{P_m}| \le 2 \binom{|\mathcal{Y}|}{2}$. Let $\ell = |\mathcal{Y}|-1$ and let $|\mathcal{R}| = 2 \binom{|\mathcal{Y}|}{2}$. For $n$ sufficiently large, we have
\begin{align}
	\log M^*(n,\epsilon)  & = \log |\mathcal{N}_{r_0,|\mathcal{Y}|}^*| \nonumber\\
	& \ge \log \bigg (\lambda \bigg (  \frac{\sqrt{n}F_4}{-\ell\Phi^{-1}(\epsilon/|\mathcal{R}|)} + \frac{\ell -1}{\ell}  \bigg )^\ell  \nonumber \\
	 &  \ \ \ \ \ \ \ \ \ \ - F_5(\sqrt{n})^{\ell - 1} \bigg )  \label{eq:Approximation_of_DMCs_proof_achiev_3}\\
	& \ge \ell \log \left( \frac{\sqrt{n}}{-\ell\Phi^{-1}(\epsilon/|\mathcal{R}|)} \right)  + \log \lambda + \log F_6 \label{eq:Approximation_of_DMCs_proof_achiev_4},
\end{align}
where $F_6 > 0$ is a constant, (\ref{eq:Approximation_of_DMCs_proof_achiev_3}) holds for suitable $F_4 >0$ and $F_5 > 0$ by Theorem \ref{thm:packing_num_subs} and Taylor’s formula of $\Phi^{-1}(\cdot)$, (\ref{eq:Approximation_of_DMCs_proof_achiev_4}) holds because the inequality $\log (G_1x^k) - \log (G_1x^{k}-x^{k-1}) \le G_0$ holds for a constant $G_0>0$ if constant $G_1>0$ and $x$ is sufficiently large. This completes the proof.
{\hfill $\square$}

\subsection{Approximation of BSC and BEC Permutation Channels}

We apply Theorem \ref{thm:Approximation_of_DMCs} to obtain the following approximation.
\begin{corollary}\label{corollary:Gaussian_Approximation_of_BSCPermC_weak}
	For BSC permutation channels with crossover probability $\delta$, there exists a number $N_0 \ge 1$, such that 
	\begin{equation}\label{eq:Gaussian_Approximation_of_BSCPermC_weak}
		\log M^*(n,\epsilon)  = \log \left(  \frac{(1-2 \delta)\sqrt{n}}{-\Phi^{-1}\left(\frac{\epsilon}{2}\right)} \right)  + \theta
	\end{equation}
	is achievable  for all $n \ge N_0$, where $\theta$ is a constant.
\end{corollary}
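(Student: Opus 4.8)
\emph{Proof plan.} The plan is to obtain Corollary~\ref{corollary:Gaussian_Approximation_of_BSCPermC_weak} as a direct specialization of Theorem~\ref{thm:Approximation_of_DMCs} to the binary output alphabet, so that the only channel-specific computation is the volume ratio $\lambda$. First I would note that for $0<\delta<1/2$ the matrix $\mathsf{BSC}_{\delta}$ is strictly positive and a full-rank $2\times 2$ matrix, so Theorem~\ref{thm:Approximation_of_DMCs} applies with $|\mathcal{Y}|=2$. This forces $\ell=|\mathcal{Y}|-1=1$ and $|\mathcal{R}|=2\binom{2}{2}=2$, and substituting these values into the conclusion of the theorem gives
\begin{equation}
	\log M = \log\!\left( \frac{\sqrt{n}}{-\Phi^{-1}(\epsilon/2)} \right) + \log\lambda + \theta .
\end{equation}

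Next I would compute $\lambda=\vol(\Delta_1^*)/\vol(\Delta_1)$ directly. Writing $P_X=(p,1-p)$ with $p\in[0,1]$, the induced marginal is $P_X\circ W=(\delta+p(1-2\delta),\,1-\delta-p(1-2\delta))$, whose first coordinate sweeps the interval $[\delta,1-\delta]$ of length $1-2\delta$ as $p$ ranges over $[0,1]$, while $\Delta_1$ has length $1$; hence $\lambda=1-2\delta$. Substituting and merging the two logarithms via $\log\!\big(\tfrac{\sqrt n}{-\Phi^{-1}(\epsilon/2)}\big)+\log(1-2\delta)=\log\!\big(\tfrac{(1-2\delta)\sqrt n}{-\Phi^{-1}(\epsilon/2)}\big)$ yields \eqref{eq:Gaussian_Approximation_of_BSCPermC_weak}, with $\theta$ inherited from Theorem~\ref{thm:Approximation_of_DMCs}. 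No deep obstacle arises, since the Berry--Esseen estimate and the packing count are already done inside that theorem; the only points needing care are (i) verifying that the hypotheses of Lemma~\ref{lemma:properities_Vn_Tn} hold here (strict positivity, full rank, and packing radius $r_0\le \tfrac{2\log e}{9}$), so that the factor $F_0$ relating $V(P_m\|Q)$ to the radius is a genuine $n$-independent constant and the solved radius $r_m=\Theta(1/n)$ is admissible, and (ii) getting the normalization of the one-dimensional ``volume'' right in the $\lambda$ computation.

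For completeness I would also give a self-contained derivation that starts from Theorem~\ref{thm:random_coding_bound_BSC_Perm} instead: apply the Berry--Esseen bound (Theorem~\ref{Berry-Esseen Bound}) to the binomial tails $\anglebinom{n}{\ubar{T}_{i}}$ and $\anglebinom{n}{\bar{T}_{i}}$; use $D(P_m\|P_{m\pm1})\le r_0$ and $V(P_m\|P_{m\pm1})=\Theta(r_0)$ for adjacent packing centers (a short Taylor expansion of the binary divergence and of the variance of its log-likelihood ratio); solve for the radius $r_0$ that makes the right-hand side of \eqref{eq:thm:random_coding_bound_BSC_Perm_1} at most $\epsilon$; and substitute into $|\mathcal{M}|=\lfloor 1/r\rfloor+1$ with $r=\tfrac{1}{1-2\delta}\sqrt{r_0/(2\log e)}$ from Proposition~\ref{proposition:packing_num_binary}. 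This reproduces the same expression and, as a bonus, makes $\theta$ explicit. The most tedious part will be tracking the lower-order terms --- the floor, the $+1$, and the $B_n/\sqrt n$ Berry--Esseen slack --- but none of this is conceptually difficult.
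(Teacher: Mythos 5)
Your proof is correct and takes essentially the same route as the paper: specialize Theorem~\ref{thm:Approximation_of_DMCs} to $|\mathcal{Y}|=2$ (so $\ell=1$, $|\mathcal{R}|=2$), compute the volume ratio $\lambda=1-2\delta$, and absorb $\log\lambda$ into the leading logarithm. The paper simply cites ``Lagrange's formula'' for the volume ratio where you compute it directly by tracking the image interval $[\delta,1-\delta]$; both are fine, and your alternative self-contained derivation from Theorem~\ref{thm:random_coding_bound_BSC_Perm} is a reasonable bonus but not part of the paper's argument.
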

\begin{proof}
	For BSC, we have $\Delta_{1}^* = \{ (p,1-p): \delta \le p \le 1-\delta\}$. By using  Lagrange’s formula \cite{stein_note_1966},
	the volume ratio of $\Delta_{1}^*$ and $\Delta_{1}$ is $1-2 \delta$. Using Theorem \ref{thm:Approximation_of_DMCs} to obtain the result.
\end{proof}

 The approximation of BSC permutation channels can also be derived from Proposition \ref{proposition:packing_num_binary}.  To use the message set constructed by (\ref{eq:Gamma_message_set_binary}), we need the following lemma:

\begin{lemma} \label{lemma:properities_Vn_Tn_binary}
	In the notation of Lemma \ref{lemma:properities_Vn_Tn}, let $\mathcal{N}_{r_0,2}^*$ constructed by (\ref{eq:Gamma_message_set_binary}) be the set of packing centers on $\Delta_{1}^*$. Then, there exists a packing radius $r_1$, such that for all $r_0\le r_1$, we have
	\begin{equation}
		F_0 r_0 \le V(P\|Q) \le F_1 r_0
	\end{equation}
	and 
	\begin{equation}
		T(P\|Q) \le F_2 r_0^{3/2},
	\end{equation}
	where $F_0$, $F_1$ and $F_2$ are positive and finite.
\end{lemma}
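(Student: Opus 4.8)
The plan is to exploit that, in the binary case ($|\mathcal{Y}|=2$), the log-likelihood ratio $L := \log\frac{P(Y)}{Q(Y)}$ under $Y\sim P$ is a two-valued random variable: writing $P=(q,1-q)$ and $Q=(q',1-q')$, it takes the value $\ell_1 := \log\frac{q}{q'}$ with probability $q$ and $\ell_0 := \log\frac{1-q}{1-q'}$ with probability $1-q$. A direct computation then gives the closed forms $V(P\|Q) = q(1-q)(\ell_1-\ell_0)^2$ and $T(P\|Q) = q(1-q)\,(q^2+(1-q)^2)\,|\ell_1-\ell_0|^3$, the latter because the centered variable $L-D(P\|Q)$ takes exactly the two values $(1-q)(\ell_1-\ell_0)$ and $-q(\ell_1-\ell_0)$. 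Since every packing center lies in $\Delta_{1}^* = \{(q,1-q):\delta_1\le q\le 1-\delta_2\}$, we have $q(1-q)\in[\delta_1\delta_2,\tfrac14]$ and $q^2+(1-q)^2\le1$, so the lemma reduces to two-sided control of the single scalar $|\ell_1-\ell_0|$.

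For that scalar I would use the identity $\ell_1-\ell_0 = \log\frac{q(1-q')}{q'(1-q)} = \log e\int_{q'}^{q}\frac{dt}{t(1-t)}$, obtained by differentiating $t\mapsto\log\frac{q(1-t)}{t(1-q)}$ (which vanishes at $t=q$). Because $Q\in\mathcal{R}_{P}$, the numbers $q$ and $q'$ are neighbouring grid points of $\Gamma_{r,2}^*$, so $|q-q'| = \Delta := \xi/\lfloor1/r\rfloor$, and the range of integration stays inside $[\delta_1,1-\delta_2]$, where $t(1-t)\in[\delta_1\delta_2,\tfrac14]$. Hence $4\log e\cdot\Delta \le |\ell_1-\ell_0| \le \tfrac{\log e}{\delta_1\delta_2}\,\Delta$. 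It then remains to compare $\Delta$ with $\sqrt{r_0}$: substituting $r=\tfrac1\xi\sqrt{\tfrac{r_0}{2\log e}}$ and using $\tfrac1r-1<\lfloor1/r\rfloor\le\tfrac1r$ gives $\sqrt{\tfrac{r_0}{2\log e}}\le\Delta<\tfrac{1}{1-r}\sqrt{\tfrac{r_0}{2\log e}}$, so $\Delta^2\asymp r_0$ with explicit constants (for instance $\Delta^2\in[\tfrac{r_0}{2\log e},\tfrac{2r_0}{\log e})$ as soon as $r\le\tfrac12$, the small-$r_0$ regime in which this lemma is applied). Putting the pieces together, $V(P\|Q)=q(1-q)(\ell_1-\ell_0)^2\in[F_0r_0,F_1r_0]$ and $T(P\|Q)\le\tfrac14|\ell_1-\ell_0|^3\le\tfrac14\left(\tfrac{\log e}{\delta_1\delta_2}\right)^3\Delta^3\le F_2r_0^{3/2}$, with $F_0,F_1,F_2>0$ depending only on $\delta_1,\delta_2$.

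The only delicate point is the bookkeeping around $\lfloor1/r\rfloor$: to secure a clean two-sided comparison $\Delta\asymp\sqrt{r_0}$ one restricts $r_0$ below a fixed threshold, so that $\lfloor1/r\rfloor$ lies within a universal ratio of $1/r$ — which is precisely the situation in which Lemma~\ref{lemma:properities_Vn_Tn_binary} is invoked inside the proof of Theorem~\ref{thm:Approximation_of_DMCs} — or else one simply carries the $\tfrac{1}{1-r}$ factor along, which stays bounded whenever $r<1$. I expect this to be the main obstacle, and it is a mild one; everything else is the same two-point, mean-value argument as in Lemma~\ref{lemma:properities_Vn_Tn}, but cleaner here because the output alphabet has size two and no dimension-dependent constant appears.
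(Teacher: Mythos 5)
Your proof is correct and takes a genuinely different route from the paper's. The paper proves Lemma~\ref{lemma:properities_Vn_Tn_binary} by reusing the machinery of Lemma~\ref{lemma:properities_Vn_Tn}: it rewrites each coordinate $P(y)$ of the binary marginal in the same form $\frac{(\cdot)\sqrt{r_0/(2\log e)}}{1-(\cdot)\sqrt{r_0/(2\log e)}}$ used there, then ``repeats the proof of Lemma~\ref{lemma:properities_Vn_Tn}'' with the substitutions $a\mapsto a-\frac{\delta_1\delta}{\xi}+F_0$ and $\delta\mapsto\frac{\delta}{\xi}$, inheriting term-by-term bounds on $P(y)\bigl(\log\frac{P(y)}{Q(y)}\bigr)^2$ and $P(y)\bigl|\log\frac{P(y)}{Q(y)}\bigr|^3$ that were built for a general $|\mathcal{Y}|$-ary alphabet. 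You instead exploit that $|\mathcal{Y}|=2$ fully: the log-likelihood ratio is a two-point random variable, so $V(P\|Q)=q(1-q)(\ell_1-\ell_0)^2$ and $T(P\|Q)=q(1-q)\bigl(q^2+(1-q)^2\bigr)|\ell_1-\ell_0|^3$ are exact identities, and the whole lemma collapses to two-sided control of the single scalar $\ell_1-\ell_0=\log e\int_{q'}^{q}\frac{dt}{t(1-t)}$, which you bound by $4\log e\,\Delta\le|\ell_1-\ell_0|\le\frac{\log e}{\delta_1\delta_2}\Delta$ and then compare $\Delta$ with $\sqrt{r_0}$. Your version is cleaner and self-contained for the binary case, and the integral/mean-value bound on $\ell_1-\ell_0$ makes the dependence on $\delta_1,\delta_2$ transparent; the paper's version avoids redoing any computation but requires the reader to re-trace Lemma~\ref{lemma:properities_Vn_Tn} under a substitution and obscures why the binary case has ``no dimension-dependent constant.'' One small remark: $\delta_1\delta_2$ is a valid (not tight) lower bound for $t(1-t)$ on $[\delta_1,1-\delta_2]$ — the true minimum is $\min\{\delta_1(1-\delta_1),\delta_2(1-\delta_2)\}$ — but since the lemma only asserts positivity and finiteness of the constants, your looser bound is fine. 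Your handling of the $\lfloor1/r\rfloor$ bookkeeping by restricting to $r\le\frac12$ is also consistent with the paper, which imposes $r_0\le 2\xi^2\log e$.
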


Then, we have the following result.
\begin{theorem}\label{thm:Gaussian_Approximation_of_BSCPermC}
	For BSC permutation channels with crossover probability $\delta$, there exists a number $N_0 \ge 1$, such that
	\begin{equation}\label{eq:Gaussian_Approximation_of_BSCPermC}
		\log M^*(n,\epsilon) = \log \left( \Bigg \lceil \frac{(1-2 \delta)\sqrt{n}}{-\Phi^{-1}\left(\frac{\epsilon}{2}\right)} \Bigg \rceil \right)  + \theta
	\end{equation}
	is achievable  for all $n \ge N_0$, where $\theta$ is a constant.
\end{theorem}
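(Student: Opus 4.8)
The plan is to rerun the argument behind Theorem~\ref{thm:Approximation_of_DMCs} in the binary setting, but replacing the grid of (\ref{eq:Gamma_message_set_DMCs}) (and the lossy count of Theorem~\ref{thm:packing_num_subs}) by the grid of (\ref{eq:Gamma_message_set_binary}) together with the exact count of Proposition~\ref{proposition:packing_num_binary}. The payoff of this substitution is structural: since $|\mathcal{Y}|-1=1$, the packing number is exactly $\lfloor 1/r\rfloor+1$ with no subtracted correction term, so the chain (\ref{eq:Approximation_of_DMCs_proof_achiev_3})--(\ref{eq:Approximation_of_DMCs_proof_achiev_4}) collapses and the leading term emerges as a single ceiling rather than as an $\ell$-th power that must be expanded by Taylor's formula. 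This is precisely why the constant in (\ref{eq:Gaussian_Approximation_of_BSCPermC}) improves on Corollary~\ref{corollary:Gaussian_Approximation_of_BSCPermC_weak}.

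Concretely, first I would instantiate Theorem~\ref{thm:random_coding_bound} (equivalently Theorem~\ref{thm:random_coding_bound_BSC_Perm}) with the message set $\mathcal{N}_{r_0,2}^*$ built from (\ref{eq:Gamma_message_set_binary}), so that $|\mathcal{R}_{P_m}|\le 2$ and $D(P_m\|Q_j)\ge r_0$ for every $Q_j\in\mathcal{R}_{P_m}$ by the packing property. Applying the Berry--Esseen bound (Theorem~\ref{Berry-Esseen Bound}) to $\sum_{k=1}^{n}d(m,j,Y_k)$, whose mean is $D(P_m\|Q_j)$, variance $V(P_m\|Q_j)$, and third absolute moment $T(P_m\|Q_j)$, and then feeding in the binary estimates $F_0 r_0\le V(P_m\|Q_j)\le F_1 r_0$ and $T(P_m\|Q_j)\le F_2 r_0^{3/2}$ of Lemma~\ref{lemma:properities_Vn_Tn_binary}, yields
\begin{equation}
	\mathbb{P}_e[m]\le \frac{2B_n}{\sqrt{n}}+2\,\Phi\!\left(-\sqrt{\tfrac{n r_0}{F_1}}\right),
\end{equation}
uniformly in $m$ (boundary messages, where $|\mathcal{R}_{P_m}|=1$, only improve this), with $B_n\le 6F_2/F_0^{3/2}$ a constant since the $r_0$ cancels. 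Averaging over $m$ gives the same bound on the average error; requiring it to be at most $\epsilon$ and solving for the radius gives $r_0=\tfrac{F_1}{n}\bigl(\Phi^{-1}(\epsilon/2-B_n/\sqrt{n})\bigr)^2$, valid once $n$ is large enough that the argument of $\Phi^{-1}$ lies in $(0,1)$ (for $n$ large the $\min\{1,\cdot\}$ in Theorem~\ref{thm:random_coding_bound_BSC_Perm} is inactive).

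Next I would translate this $r_0$ back to the grid radius via $r=\tfrac{1}{1-2\delta}\sqrt{r_0/(2\log e)}$ (recall $\xi=1-\delta_1-\delta_2=1-2\delta$ for $\mathsf{BSC}_\delta$), so that
\begin{equation}
	\frac1r=\frac{(1-2\delta)\sqrt{n}}{-\Phi^{-1}\!\left(\tfrac{\epsilon}{2}-\tfrac{B_n}{\sqrt{n}}\right)}\sqrt{\frac{2\log e}{F_1}},
\end{equation}
and then invoke Proposition~\ref{proposition:packing_num_binary} to get $|\mathcal{M}|=\lfloor 1/r\rfloor+1$. Since $1/r$ equals $(1-2\delta)\sqrt{n}/(-\Phi^{-1}(\epsilon/2))$ up to a fixed multiplicative constant and a factor tending to $1$, for $n$ large we have $\lfloor 1/r\rfloor+1\ge c_\star\,\bigl\lceil (1-2\delta)\sqrt{n}/(-\Phi^{-1}(\epsilon/2))\bigr\rceil$ for a constant $c_\star>0$, whence $\log M=\log(\lfloor 1/r\rfloor+1)\ge \log\bigl\lceil (1-2\delta)\sqrt{n}/(-\Phi^{-1}(\epsilon/2))\bigr\rceil+\theta$ with $\theta=\log c_\star$, which is the claimed achievable value.

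I expect the only real obstacle to be the bookkeeping in this last step: one must check that replacing $\Phi^{-1}(\epsilon/2-B_n/\sqrt{n})$ by $\Phi^{-1}(\epsilon/2)$, and absorbing the resulting $o(1)$ discrepancy between $1/r$ and the target ceiling into the additive constant $\theta$, is legitimate after taking logarithms, i.e.\ that all $n$-dependent slack is genuinely $O(1)$. The analytic inputs---the uniform variance/third-moment control of Lemma~\ref{lemma:properities_Vn_Tn_binary} and the exact, loss-free count of Proposition~\ref{proposition:packing_num_binary}---do the heavy lifting, so the remainder is routine once those are in hand.
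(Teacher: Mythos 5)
Your proposal is correct and follows essentially the same route as the paper: the paper's proof simply says "repeat the argument of Theorem~\ref{thm:Approximation_of_DMCs} with the grid~(\ref{eq:Gamma_message_set_binary}), Proposition~\ref{proposition:packing_num_binary}, and Lemma~\ref{lemma:properities_Vn_Tn_binary} in place of~(\ref{eq:Gamma_message_set_DMCs}), Theorem~\ref{thm:packing_num_subs}, and Lemma~\ref{lemma:properities_Vn_Tn}, then translate $r$ via $r=\frac{1}{1-2\delta}\sqrt{r_m/(2\log e)}$ and pass from $\lfloor\cdot\rfloor+1$ to $\lceil\cdot\rceil$," which is exactly the sequence of steps you spell out. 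You have merely unpacked the Berry--Esseen step, the choice of $r_0$, and the constant-absorption bookkeeping that the paper leaves implicit.
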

\begin{proof}
	Instead of using (\ref{eq:Gamma_message_set_DMCs}), we use (\ref{eq:Gamma_message_set_binary}) with $\Delta_{1}^* = \{ (p,1-p): \delta \le p \le 1-\delta\}$. Repeat the argument of the proof of Theorem \ref{thm:Approximation_of_DMCs} replacing Lemma \ref{lemma:properities_Vn_Tn} with Lemma \ref{lemma:properities_Vn_Tn_binary}. Note that for  $\mathcal{N}_{r_m,2}^*$ constructed by (\ref{eq:Gamma_message_set_binary}), we have
	\begin{equation}
		r = \frac{1}{1-2\delta} \sqrt{\frac{r_m}{2 \log e}}.
	\end{equation}
	Using Proposition \ref{proposition:packing_num_binary} to continue as follows:
	\begin{align}
		\log M^*(n,\epsilon) 
		& \ge \log \left( \Bigg \lfloor \frac{(1-2 \delta)\sqrt{n}}{-\Phi^{-1}\left(\frac{\epsilon}{2}\right)} \Bigg  \rfloor + 1 \right)  + \log F_0 \nonumber \\
		& \ge \log \left( \Bigg \lceil \frac{(1-2 \delta)\sqrt{n}}{-\Phi^{-1}\left(\frac{\epsilon}{2}\right)} \Bigg \rceil \right)  + \log F_0,
	\end{align}
	where $F_0 \ge 0$ is a constant. This completes the proof.
\end{proof} 

Next, for BEC permutation channels, we have the following approximation. 

\begin{theorem}\label{thm:Gaussian_Approximation_of_BECPermC}
	For BEC permutation channels with erasure probability $\eta$, there exists a number $N_0 \ge 1$, such that
	\begin{equation}
		\log M^*(n,\epsilon)  = \log \left( \frac{(1-\eta)\sqrt{n}}{-\Phi^{-1}\left(\frac{\epsilon}{2}\right)} \right)  + \theta
	\end{equation}
	and
	\begin{equation}
		\log M^*(n,\epsilon) = \log \left( \Bigg \lceil \frac{(1-\eta)\sqrt{n}}{-\Phi^{-1}\left(\frac{\epsilon}{2}\right)} \Bigg \rceil \right)  + \theta
	\end{equation}
	are achievable for all $n \ge N_0$, where $\theta$ is a constant.
\end{theorem}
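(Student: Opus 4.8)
The plan is to reduce BEC permutation channels to BSC permutation channels exactly as in Theorem \ref{thm:random_coding_bound_BEC}, and then invoke the Gaussian approximations already proved for the BSC case. First I would recall from Theorem \ref{thm:random_coding_bound_BEC} that for a BEC permutation channel with erasure probability $\eta$, the achievable average error probability and maximal code size coincide with those of a BSC permutation channel with crossover probability $\eta/2$; this is the degradation argument via the Doeblin minorization condition, so no new channel-coding work is needed. Consequently $\log M$ for $\mathsf{BEC}_{\eta}$ is the same quantity (up to the constant $\theta$) as $\log M$ for $\mathsf{BSC}_{\eta/2}$.

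Second, I would apply Corollary \ref{corollary:Gaussian_Approximation_of_BSCPermC_weak} with $\delta = \eta/2$. The volume ratio $\lambda$ of $\Delta_1^*$ and $\Delta_1$ for the induced channel is $1 - 2\delta = 1 - \eta$, which is exactly the factor appearing in the claimed expression. Substituting $\delta = \eta/2$ into equation \eqref{eq:Gaussian_Approximation_of_BSCPermC_weak} immediately gives
\begin{equation}
	\log M = \log \left( \frac{(1-\eta)\sqrt{n}}{-\Phi^{-1}\left(\frac{\epsilon}{2}\right)} \right) + \theta,
\end{equation}
which is the first displayed identity. For the second, sharper identity, I would instead invoke Theorem \ref{thm:Gaussian_Approximation_of_BSCPermC} with $\delta = \eta/2$, which replaces the argument of the outer logarithm by its ceiling; this yields
\begin{equation}
	\log M = \log \left( \Bigg \lceil \frac{(1-\eta)\sqrt{n}}{-\Phi^{-1}\left(\frac{\epsilon}{2}\right)} \Bigg \rceil \right) + \theta,
\end{equation}
exactly as stated. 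The improvement from the first form to the second is the same ceiling-versus-floor refinement obtained in the BSC case by switching the message-set construction from \eqref{eq:Gamma_message_set_DMCs} to \eqref{eq:Gamma_message_set_binary}, via Lemma \ref{lemma:properities_Vn_Tn_binary} in place of Lemma \ref{lemma:properities_Vn_Tn}.

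The only subtlety I anticipate is bookkeeping on the constant $\theta$: the reduction of Theorem \ref{thm:random_coding_bound_BEC} preserves error probability and code size exactly, so any additive constant absorbed into $\theta$ in the BSC statements carries over unchanged, and one should simply note that the $\theta$ here may differ from the $\theta$ in the BSC statements by a bounded amount. There is no genuine analytic obstacle — the main work (Berry–Esseen control of the decoding metric, the packing-number count, the Taylor expansion of $\Phi^{-1}$) has already been carried out in the proofs of Theorem \ref{thm:Approximation_of_DMCs}, Corollary \ref{corollary:Gaussian_Approximation_of_BSCPermC_weak}, and Theorem \ref{thm:Gaussian_Approximation_of_BSCPermC}. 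Thus the proof is a two-line corollary: apply the BEC-to-BSC equivalence, then specialize the two BSC approximations at $\delta = \eta/2$.
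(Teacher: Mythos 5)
Your proposal is correct and matches the paper's approach: the paper's proof of this theorem is precisely "repeat the argument of the proof of Corollary~\ref{corollary:Gaussian_Approximation_of_BSCPermC_weak} and Theorem~\ref{thm:Gaussian_Approximation_of_BSCPermC} replacing $\delta$ with $\eta/2$," which is the BEC-to-$\mathsf{BSC}_{\eta/2}$ reduction via Theorem~\ref{thm:random_coding_bound_BEC} followed by specialization of the two BSC approximations. You correctly flag the necessity of routing through the degradation equivalence (since the BEC matrix is non-square and contains zeros, the BSC machinery cannot be applied to it directly) and you track the volume ratio $1-2\delta = 1-\eta$ accurately.
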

\begin{proof}
	Through Theorem \ref{thm:random_coding_bound_BEC}, repeat the argument of the proof of Corollary \ref{corollary:Gaussian_Approximation_of_BSCPermC_weak} and Theorem \ref{thm:Gaussian_Approximation_of_BSCPermC}, replacing $\delta$ with $\eta / 2$.
\end{proof}

	Finally, we remark that the Gaussian approximation shows some properties of the maximal code size with a given blocklength $n$ and probability of error $\epsilon$. In BSC permutation channels, while the channel capacity is only related to the rank of the channel matrix, the rate at which the achievable maximal code size approaches the capacity is affected by crossover probability $\delta$. 

\subsection{Numerical Results}

Here, we give the numerical results. Since the BEC permutation channel with erasure probability $\eta$ can be equivalent to the BSC permutation channel with crossover probability $\eta/2$ according to Theorem \ref{thm:random_coding_bound_BEC}, we focus on the numerical results of BSC permutation channels. We use Theorem \ref{thm:random_coding_bound_BSC_Perm} to compute the non-asymptotic achievability bound. We start searching from $M=2$ until the right side of (\ref{eq:thm:random_coding_bound_BSC_Perm_1}) is less than the error probability $\epsilon$. For Gaussian approximation, we use (\ref{eq:Gaussian_Approximation_of_BSCPermC_weak}) and (\ref{eq:Gaussian_Approximation_of_BSCPermC}) but omit the remainder term $\theta$. By comparison, we see that although the remainder term of the Gaussian approximation is a constant, it is still quite close to the non-asymptotically achievability bound. In fact, for all $n \ge 20$, the difference between (\ref{eq:Gaussian_Approximation_of_BSCPermC}) and Theorem \ref{thm:random_coding_bound_BSC_Perm} is within $1$ bits in $\log M^*(n,\epsilon)$. 
\begin{figure}[t]
	\centering
	\includegraphics[width = 0.4\textwidth]{./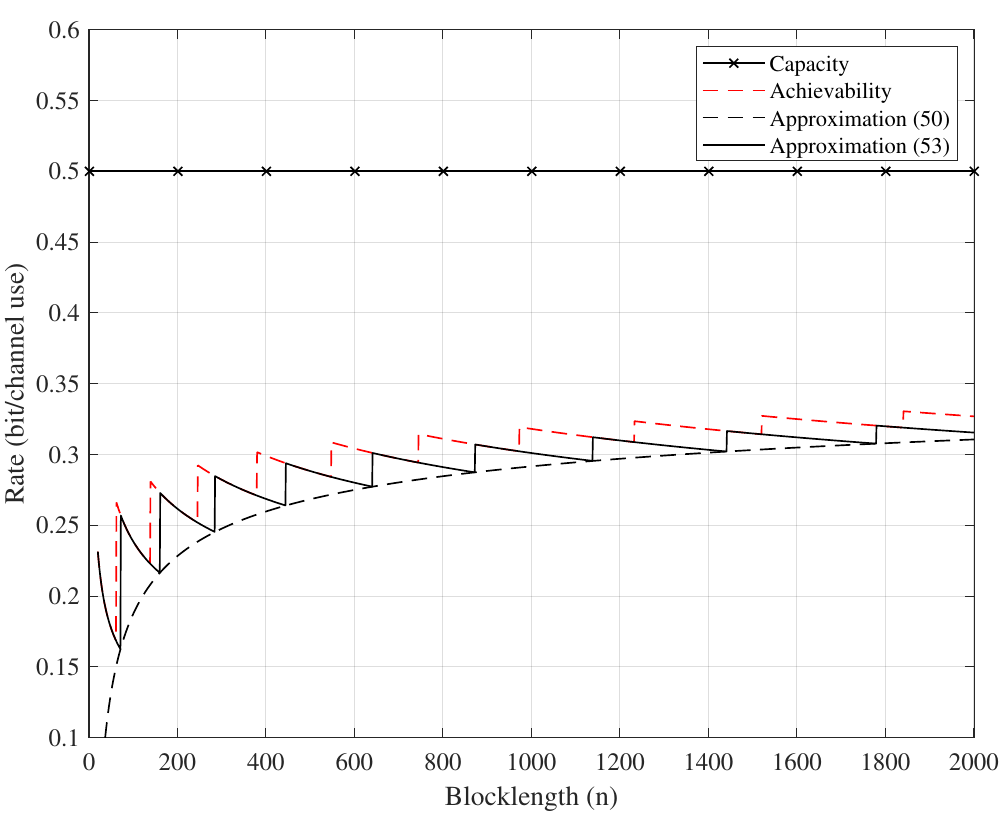}
	\captionsetup{font=footnotesize}
	\caption{\  Rate-blocklength tradeoff for the BSC with crossover probability $\delta=0.11$ and average block error rate $\epsilon = 10^{-3}$: Gaussian approximation}
	\label{fig:BSC_11_e-3_approximation}
\end{figure}

\begin{figure}[t]
	\centering
	\includegraphics[width = 0.4\textwidth]{./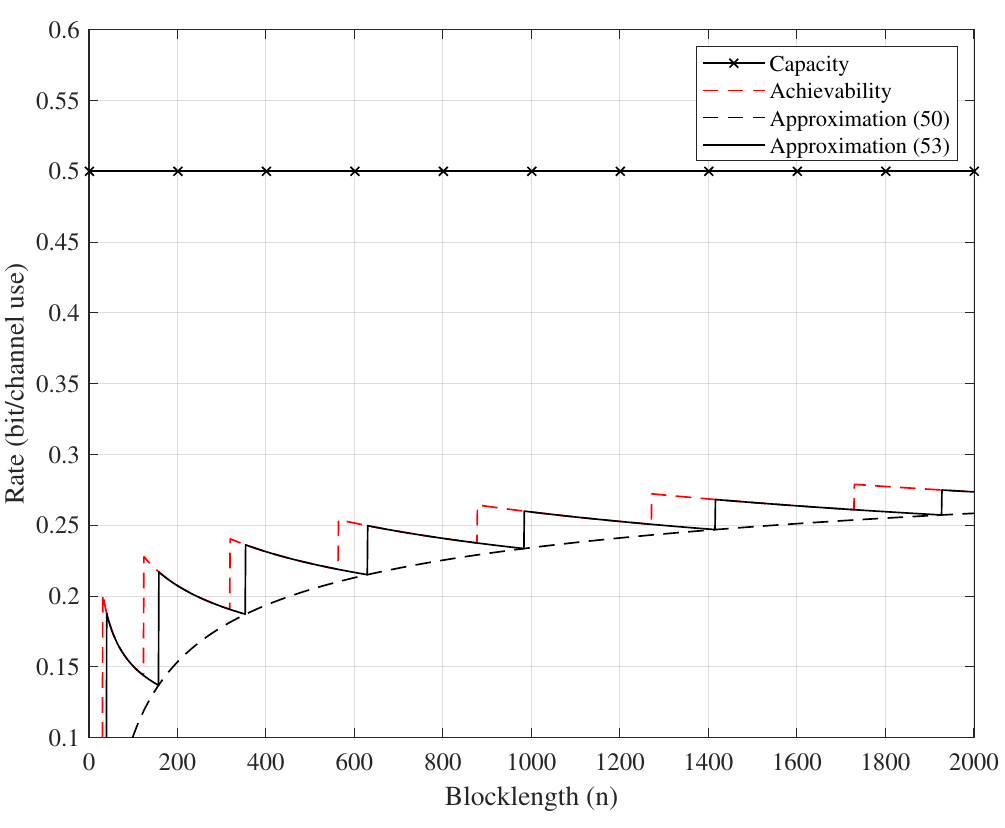}
	\captionsetup{font=footnotesize}
	\caption{\  Rate-blocklength tradeoff for the BSC with crossover probability $\delta=0.11$ and average block error rate $\epsilon = 10^{-6}$: Gaussian approximation}
	\label{fig:BSC_11_e-6_approximation}
\end{figure}

\begin{figure}[t]
	\centering
	\includegraphics[width = 0.4\textwidth]{./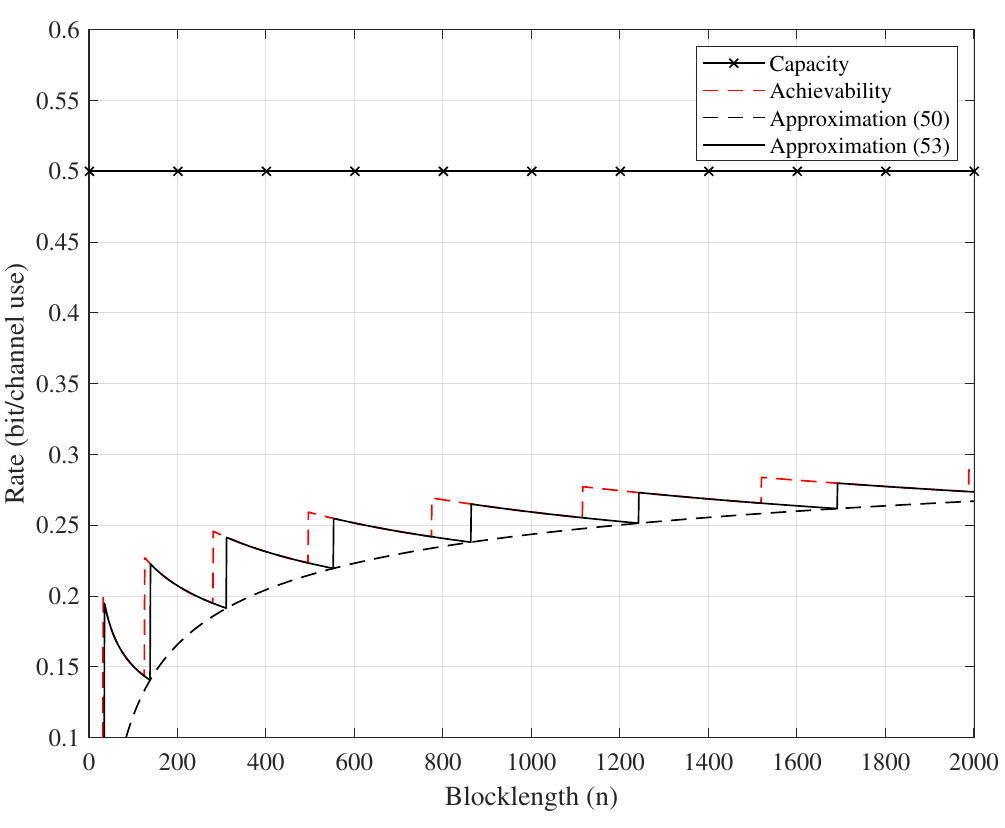}
	\captionsetup{font=footnotesize}
	\caption{\  Rate-blocklength tradeoff for the BSC with crossover probability $\delta=0.22$ and average block error rate $\epsilon = 10^{-3}$: Gaussian approximation}
	\label{fig:BSC_22_e-3_approximation}
\end{figure}

\begin{figure}[t]
	\centering
	\includegraphics[width = 0.4\textwidth]{./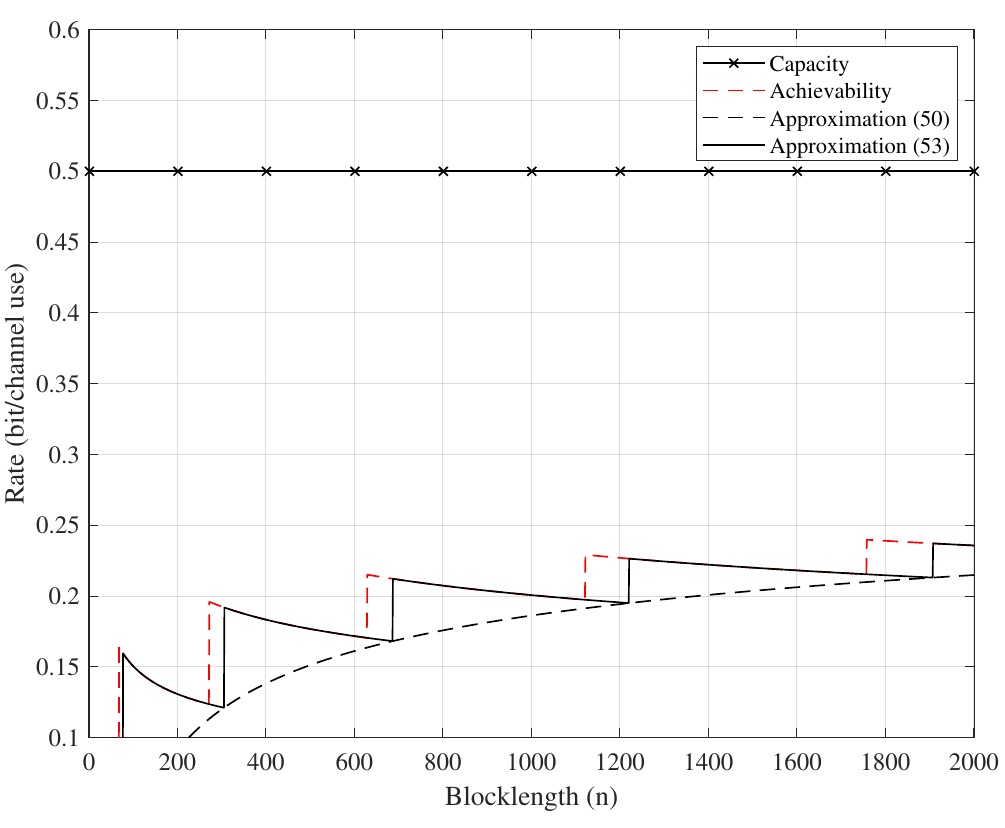}
	\captionsetup{font=footnotesize}
	\caption{\  Rate-blocklength tradeoff for the BSC with crossover probability $\delta=0.22$ and average block error rate $\epsilon = 10^{-6}$: Gaussian approximation}
	\label{fig:BSC_22_e-6_approximation}
\end{figure}

	Additionally, we compare our bound with the previous bound. To reduce the complexity, we use approximation (\ref{eq:Gaussian_Approximation_of_BSCPermC}) to compare with Makur's achievability bound for BSC permutation channels in \cite{makur_coding_2020}. The results show that our new achievability bound is uniformly better than Makur's bound.  In the setup of Figure \ref{fig:BSC_11_e-3_approximation}, our bound quickly approaches half of the capacity ($n \approx 300$). As the blocklength increases, Makur's bound reaches $20\%$ of the channel capacity at about $n \approx 5\times 10^4$, shown in Figure \ref{fig:Compare_achievability}. There are two reasons for this. First, we use the optimal ML decoder. Second, we show a relationship between error events, which reduces the number of error events when applying the union bound. On the other hand, Theorem \ref{thm:random_coding_bound}, in contrast to Makur's bound, yields a suitable asymptotic expansion of $\log M$.

\begin{figure}[t]
	\centering
	\includegraphics[width = 0.4\textwidth]{./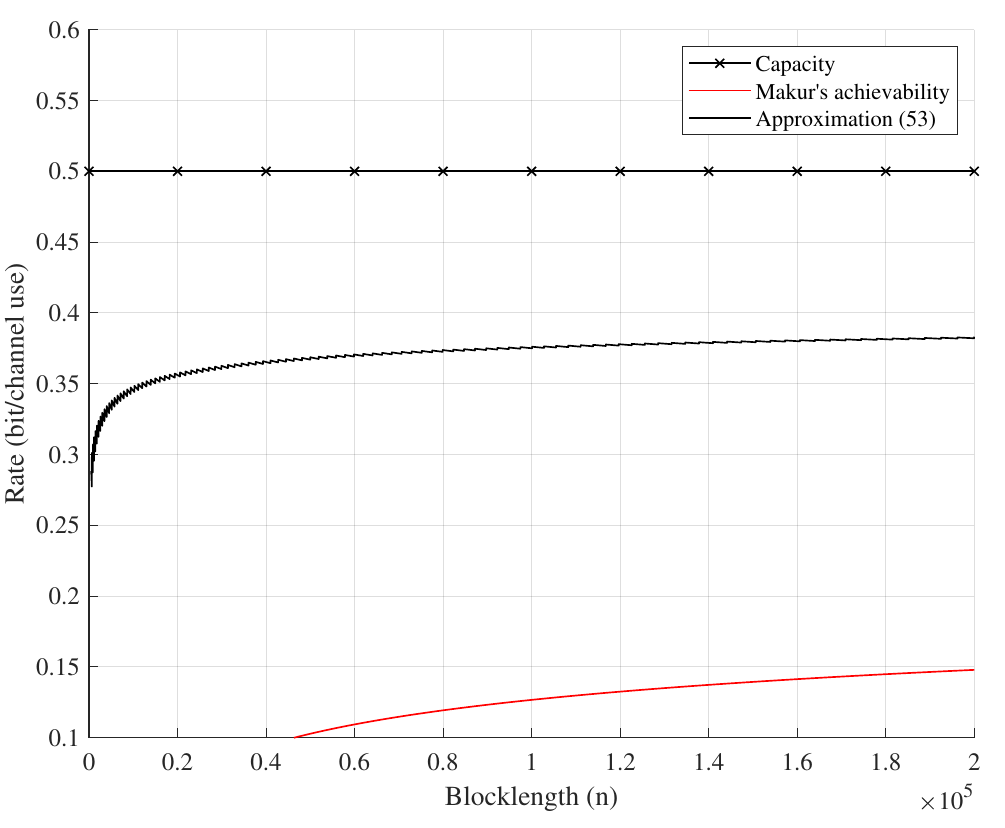}
	\captionsetup{font=footnotesize}
	\caption{\  Rate-blocklength tradeoff for the BSC with crossover probability $\delta=0.11$ and average block error rate $\epsilon = 10^{-3}$:comparison of the bounds.}
	\label{fig:Compare_achievability}
\end{figure}

\section{Conclusion and Discussion}\label{Section_conclusion}

	In summary, we established a new achievability bound for noisy permutation channels with a strictly positive and full-rank square matrix. The key element is that our analysis shows that the number of error events in the union is independent of the number of messages. This allows us to derive a refined asymptotic analysis of the achievable rate. Numerical simulations show that our new achievability bound is stronger than Makur's achievability bound in \cite{makur_coding_2020}. Additionally, our approximation is quite accurate, although the remainder term is constant. Finally, the primary future work will generalize the non-full rank and non-strictly positive matrices analysis. Other future work may improve asymptotic expansion (e.g., improving the remainder term to $o(1)$).

\appendices

\section{Divergence Packing} \label{Appendix_divergence_packing}

We first establish the basic result of divergence packing on $\Delta_{|\mathcal{Y}|-1}$. Consider the set 
\begin{align} 
	\Gamma_{r,|\mathcal{Y}|} = \bigg \{  P \in \Delta_{|\mathcal{Y}|-1}: & P = \left( \frac{a_1}{\lfloor 1/r \rfloor},...,\frac{a_{|\mathcal{Y}|}}{\lfloor 1/r \rfloor}\right) \nonumber \\
	& \ \ {\rm where} \ a_1,...,a_{|\mathcal{Y}|} \in \mathbb{Z}_{\ge 0}  \bigg \}.
\end{align} 

We have the following lemma:
\begin{lemma} \label{lemma:TV_radius}
	For $P,Q \in \Gamma_{r,|\mathcal{Y}|}$, we have
	\begin{equation}
		\min \limits_{P \neq Q} \TV(P,Q) \ge r.
	\end{equation}
\end{lemma}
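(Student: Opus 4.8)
The plan is to reduce the claim to an elementary counting fact about the integer numerators. Write $N = \lfloor 1/r \rfloor$, so that every $P \in \Gamma_{r,k}$ has the form $P = (a_1/N, \dots, a_k/N)$ with $a_i \in \mathbb{Z}_{\ge 0}$ and $\sum_{i=1}^k a_i = N$, the last equality holding because $P$ is a probability vector. Given two distinct points $P \neq Q$ in $\Gamma_{r,k}$, with numerators $(a_i)$ and $(b_i)$ respectively, set $c_i = a_i - b_i \in \mathbb{Z}$. Then
\begin{equation}
	\TV(P,Q) = \frac{1}{2}\sum_{i=1}^k |P(i) - Q(i)| = \frac{1}{2N}\sum_{i=1}^k |c_i|,
\end{equation}
so the whole lemma comes down to lower bounding $\sum_{i=1}^k |c_i|$.

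The key step is the observation that $\sum_{i=1}^k c_i = \sum_i a_i - \sum_i b_i = 0$. Hence $c \neq 0$ cannot have a single nonzero entry: any nonzero $c_i$ must be balanced by an entry of opposite sign, so at least two of the $c_i$ are nonzero and $\sum_{i=1}^k |c_i| \ge 2$. (Equivalently, $\sum_i |c_i| \equiv \sum_i c_i \equiv 0 \pmod 2$, so this positive integer is at least $2$.) Plugging back in gives $\TV(P,Q) \ge \frac{1}{2N}\cdot 2 = \frac{1}{N} = \frac{1}{\lfloor 1/r \rfloor} \ge r$, where the final inequality is just $\lfloor 1/r \rfloor \le 1/r$. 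Taking the minimum over all $P \neq Q$ in $\Gamma_{r,k}$ yields $\min_{P \neq Q} \TV(P,Q) \ge r$.

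There is essentially no obstacle here; the proof is a few lines. The only point that warrants care is the parity/conservation argument showing $\sum_i |c_i| \ge 2$ rather than merely $\ge 1$ — this is precisely what upgrades the bound from the trivial $1/(2N)$ to the stated $r$, and it is what makes the grid $\Gamma_{r,k}$ an honest $r$-packing in total variation (which is then converted, via Pinsker's inequality, into the divergence-packing guarantee used in Theorem~\ref{thm:packing_num_subs}).
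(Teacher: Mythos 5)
Your proof is correct and takes essentially the same approach as the paper: both rest on the observation that two distinct grid points differ in at least two coordinates (since the integer numerators must sum to $\lfloor 1/r\rfloor$), so the TV distance is at least $1/\lfloor 1/r\rfloor \ge r$. Your version is slightly more explicit in justifying the lower bound via the conservation argument $\sum_i c_i = 0 \Rightarrow \sum_i |c_i| \ge 2$, whereas the paper asserts the minimum is attained at a two-coordinate-differing neighbor without spelling out why it cannot be smaller.
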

\begin{proof}
	For every $P \in \Gamma_{r,|\mathcal{Y}|}$, we can find a $Q \in \Gamma_{r,|\mathcal{Y}|}$ such that for $m,n \in \mathcal{Y}$ and $m \neq n$, we have $q_m \neq p_m$, $q_n \neq p_n$. For $i \in \mathcal{Y} \setminus \{ m,n \}$, we have $q_i = p_i$. Then, we obtain that
	$\min \limits_{P \neq Q} \TV(P,Q) 
	= \min \limits_{P \neq Q} \frac{1}{2} \sum_{i \in \mathcal{Y}} |p_i-q_i| 
	= \frac{1}{\lfloor 1/r \rfloor} \ge r$.
\end{proof}
Then, we give the lower bound of the divergence packing number. 
\begin{lemma} \label{lemma:packing_num_full}
	Fix a divergence packing radius with $r_0$. We can find a $ \mathcal{N}_{r_0,|\mathcal{Y}|} = \Gamma_{r,|\mathcal{Y}|}$ such that 
	\begin{equation}
		|\mathcal{N}_{r_0,|\mathcal{Y}|}| = \binom{\lfloor 1/r \rfloor + |\mathcal{Y}|- 1}{|\mathcal{Y}|-1},
	\end{equation}
	and thus
	\begin{equation}
		|\mathcal{N}_{r_0,|\mathcal{Y}|}| \ge \left( \frac{1/r + |\mathcal{Y}| - 2}{|\mathcal{Y}|-1} \right)^{|\mathcal{Y}|-1},
	\end{equation}
	where $r = \sqrt{\frac{r_0}{2 \log e}}$.
\end{lemma}
\begin{proof}
	Fix a radius $r = \sqrt{\frac{r_0}{2 \log e}}>0$. For any $P,Q \in \Gamma_{r,|\mathcal{Y}|}$, we note that by Pinsker's inequality \cite[Theorem 7.10]{polyanskiy_2023_book} and Lemma \ref{lemma:TV_radius},
	\begin{equation}
		\min \limits_{P \neq Q}  D(P\|Q) \ge \min \limits_{P \neq Q} (2 \log e) \TV ^2 (P,Q) \ge r_0.
	\end{equation}
	Hence, any total variation packing with radius $\sqrt{\frac{r_0}{2 \log e}}$ must also give a divergence packing with radius $r_0$. 
	By simple application of combinatorial mathematics (e.g., see \cite[Chapter 1]{flajolet2009analytic}), we have
	\begin{equation}
		|\Gamma_{r,|\mathcal{Y}|}| = \binom{\lfloor 1/r \rfloor + |\mathcal{Y}| - 1}{|\mathcal{Y}|-1}.
	\end{equation}
	Note that the binomial coefficient have the following upper and lower bound (e.g., see \cite[Eq. (2.49)]{Jennifer_2022}):
	\begin{equation}
		\left( \frac{n}{k}\right)^{k} \le \binom{n}{k} \le \left(\frac{ne}{k}\right)^{k}.
	\end{equation}
	We have
	\begin{equation}
		 |\Gamma_{r,|\mathcal{Y}|}|  \ge \left( \frac{\lfloor 1/r \rfloor + |\mathcal{Y}| - 1}{|\mathcal{Y}|-1} \right)^{|\mathcal{Y}|-1}.
	\end{equation}
	Using $\lfloor 1/r \rfloor \ge 1/r-1$, we obtain
	\begin{equation}
		|\Gamma_{r,|\mathcal{Y}|}| \ge \left( \frac{1/r + |\mathcal{Y}| - 2}{|\mathcal{Y}|-1} \right)^{|\mathcal{Y}|-1}.
	\end{equation}
	Let $\mathcal{N}_{r_0,|\mathcal{Y}|} = \Gamma_{r,|\mathcal{Y}|}$ and note that $r = \sqrt{\frac{r_0}{2 \log e}}$, we complete the proof.
\end{proof}

Now, we can prove Theorem \ref{thm:packing_num_subs} and Proposition \ref{proposition:packing_num_binary}.

{\noindent \textit{Proof of Theorem \ref{thm:packing_num_subs}.} }
	We first note that $\Delta_{|\mathcal{Y}|-1}^*$ is convex because the set of input distribution $\Delta_{|\mathcal{X}|-1}$ is convex. Since the grid structure $\Gamma_{r,|\mathcal{Y}|}$ is uniform on $\Delta_{|\mathcal{Y}|-1}$, we only need to consider the loss near a $(|\mathcal{Y}|-2)$-dimensional probability space $\mathcal{B}$ on $\Delta_{|\mathcal{Y}|-1}$, in which space $\mathcal{B}$ is the boundary of $\Delta_{|\mathcal{Y}|-1}^*$. The set of possible losing points is $\mathcal{P} = \{P \in \mathcal{B}: \min_{P,Q} \TV(P\|Q) < \frac{1}{\lfloor 1/r \rfloor}, Q \in \Gamma_{r,|\mathcal{Y}|} \}$. By counting, we have the upper bound 
	\begin{align}
		|\mathcal{P}| 
		& \le 2|\mathcal{Y}|\binom{\lfloor 1/r \rfloor + |\mathcal{Y}| - 2}{|\mathcal{Y}| - 2} \nonumber \\
		& \le 2 |\mathcal{Y}| \left( \frac{( 1/r + |\mathcal{Y}| - 2)e}{|\mathcal{Y}| - 2} \right)^{|\mathcal{Y}| - 2}.
	\end{align}
	Then, we use the volume ratio to obtain the result. 
{\hfill $\square$}

{\noindent \textit{Proof of Proposition \ref{proposition:packing_num_binary}.} }
	We note that $\Gamma_{r,2}^*$ in (\ref{eq:Gamma_message_set_binary}) is a subset of $\Delta_{1}^*$. Fix $r = \frac{1}{\xi} \sqrt{\frac{r_0}{2 \log e}}$. Use Lemma \ref{lemma:TV_radius} and Pinsker's inequality \cite[Theorem 7.10]{polyanskiy_2023_book} to obtain $\min \limits_{P \neq Q}  D(P\|Q) \ge r_0$. Then, we conclude this proof by realizing that $|\Gamma_{r,2}^*| = \lfloor 1/r \rfloor +1$.
{\hfill $\square$}

\section{Proof of Lemma \ref{lemma:event_disjoint}}	\label{Appendix_lemma_event_disjoint}

We start by a lemma:
\begin{lemma} \label{lemma:event_disjoint_pre}
	Let $\mathcal{N}_{r_0,|\mathcal{Y}|}^*$ be constructed by (\ref{eq:Gamma_message_set_DMCs}) or (\ref{eq:Gamma_message_set_binary}). Assuming the transmitted message is $m$ corresponding to the marginal distribution $P$. Then, for every $\Lambda \in \Delta_{|\mathcal{Y}|-1}$ and $Q = (q_1,...,q_{|\mathcal{Y}|}) \in \mathcal{N}_{r_0,|\mathcal{Y}|}^* \setminus \mathcal{R}_{P} \cap \{ P\}$, if
	\begin{equation}\label{eq:lemma:event_disjoint_1}
		\prod_{i=1}^{|\mathcal{Y}|} p_i^{\lambda_i} \le \prod_{i=1}^{|\mathcal{Y}|} q_i^{\lambda_i},
	\end{equation}
	there exists a $Q^* =(q^*_1,...,q^*_{|\mathcal{Y}|}) \in \mathcal{R}_{P}$ such that
	\begin{equation}\label{eq:lemma:event_disjoint_2}
		\prod_{i=1}^{|\mathcal{Y}|} p_i^{\lambda_i} \le \prod_{i=1}^{|\mathcal{Y}|} (q_i^*)^{\lambda_i}.
	\end{equation}
\end{lemma}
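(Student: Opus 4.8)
Taking binary logarithms, hypothesis (\ref{eq:lemma:event_disjoint_1}) reads $F(P)\le F(Q)$ and conclusion (\ref{eq:lemma:event_disjoint_2}) reads $F(P)\le F(Q^*)$ for some $Q^*\in\mathcal{R}_P$, where $F(R):=\sum_{i=1}^{|\mathcal{Y}|}\lambda_i\log r_i$ for $R=(r_1,\dots,r_{|\mathcal{Y}|})\in\Delta_{|\mathcal{Y}|-1}$ (with the convention $0\log 0=0$). We may assume $\lambda_i>0\Rightarrow q_i>0$, since otherwise $F(Q)=-\infty<F(P)$; and $p_i>0$ for every $i$ because $W$ is strictly positive, so every element of $\mathcal{N}_{r_0,|\mathcal{Y}|}^*$ has strictly positive entries. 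The map $F$ is concave on $\Delta_{|\mathcal{Y}|-1}$, which is the geometric reason the statement should hold (the ``decision cell'' of $P$ under a concave criterion has faces only from the immediate grid neighbours), but I would argue instead by a direct combinatorial estimate rather than via a Voronoi/Delaunay argument.

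The plan is first to exploit the lattice structure. Write $\kappa$ for the grid spacing ($\kappa=1/\lfloor 1/r\rfloor$ for construction (\ref{eq:Gamma_message_set_DMCs}), $\kappa=\xi/\lfloor 1/r\rfloor$ for (\ref{eq:Gamma_message_set_binary})). Up to a common shift, all coordinates of $P$ and $Q$ lie in $\kappa\mathbb{Z}$, so $c_i:=(q_i-p_i)/\kappa\in\mathbb{Z}$ for every $i$ and $\sum_i c_i=0$; hence $Q-P=\kappa\sum_{t=1}^{L}(e_{a_t}-e_{b_t})$ for suitable $a_t,b_t\in\mathcal{Y}$, where $e_i$ is the $i$-th standard basis vector. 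I take a \emph{minimal} such representation, so that $L=\tfrac{1}{2\kappa}\|Q-P\|_1$ and no coordinate is both raised and lowered, i.e. $\#\{t:a_t=i\}=\max(c_i,0)$ and $\#\{t:b_t=i\}=\max(-c_i,0)$. Since $Q\in\mathcal{N}_{r_0,|\mathcal{Y}|}^*$ but $Q\notin\mathcal{R}_P$, the point $Q$ is not a single swap of $P$, so $L\ge 2$. For each $t$ put $R_t:=P+\kappa(e_{a_t}-e_{b_t})$; this is exactly a candidate $\bar Q_{a_t,b_t}$ or $\ubar{Q}_{a_t,b_t}$, and it lies in $\mathcal{R}_P$ as soon as $R_t\in\Delta_{|\mathcal{Y}|-1}^*$.

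The heart of the proof is the \emph{subadditivity} inequality
\[
F(Q)-F(P)\ \le\ \sum_{t=1}^{L}\bigl(F(R_t)-F(P)\bigr),
\]
which I would establish coordinate by coordinate: the contribution of index $i$ to the left side is $\lambda_i\log(1+c_i\kappa/p_i)$, and to the right side $\lambda_i\bigl[\max(c_i,0)\log(1+\kappa/p_i)+\max(-c_i,0)\log(1-\kappa/p_i)\bigr]$; for $c_i\ge 0$ the required estimate is $1+c_i(\kappa/p_i)\le(1+\kappa/p_i)^{c_i}$ and for $c_i<0$ it is $1-|c_i|(\kappa/p_i)\le(1-\kappa/p_i)^{|c_i|}$, both instances of Bernoulli's inequality $(1+x)^m\ge 1+mx$ for $m\in\mathbb{Z}_{\ge 0}$ and $x\ge-1$ (note $\kappa/p_i\le 1$ in the second case because $q_i=p_i-|c_i|\kappa\ge 0$). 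Multiplying by $\lambda_i\ge 0$ and summing yields the bound. Then $F(Q)\ge F(P)$ forces $F(R_{t^*})\ge F(P)$ for at least one $t^*$; exponentiating, $\prod_i p_i^{\lambda_i}\le\prod_i (R_{t^*})_i^{\lambda_i}$, so $Q^*:=R_{t^*}$ is the desired distribution — \emph{provided} $R_{t^*}\in\mathcal{R}_P$, i.e. $R_{t^*}\in\Delta_{|\mathcal{Y}|-1}^*$.

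The one delicate point, and the step I expect to be the main obstacle, is precisely this membership. When $P$ is in the relative interior of $\Delta_{|\mathcal{Y}|-1}^*$ — which, for the small radii $r_0$ relevant to Theorem \ref{thm:Approximation_of_DMCs}, holds for all but $O\bigl((\sqrt n)^{|\mathcal{Y}|-2}\bigr)$ of the grid points, comparable to the boundary-loss count in Theorem \ref{thm:packing_num_subs} — every single swap of $P$ stays in $\Delta_{|\mathcal{Y}|-1}^*$ and there is nothing to check. For $P$ near the relative boundary one must use convexity of $\Delta_{|\mathcal{Y}|-1}^*$ together with the observation that each step $e_{a_t}-e_{b_t}$ raises only coordinates that $Q$ raises and lowers only those that $Q$ lowers, choosing the minimal decomposition so that the partial sums $P+\kappa\sum_{t\le s}(e_{a_t}-e_{b_t})$ remain inside $\Delta_{|\mathcal{Y}|-1}^*$; this is where the hypothesis that $W$ is strictly positive and full-rank square (so that $\Delta_{|\mathcal{Y}|-1}^*$ is a full-dimensional convex subpolytope of $\Delta_{|\mathcal{Y}|-1}$) enters. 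In the binary case $\Delta_1^*$ is an interval, the step from $P$ toward $Q$ is itself one of the two elements of $\mathcal{R}_P$, and the argument collapses to a single line.
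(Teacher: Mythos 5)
Your proof is built on exactly the same engine as the paper's: the concavity of the logarithm, used coordinate-by-coordinate in the form of Bernoulli's inequality to compare the multi-step log-ratio $\log(p_i/q_i)$ against the sum of one-step log-ratios. The paper packages this differently — it defines a region $\mathcal{B}_P\subset\Delta_{|\mathcal{Y}|-1}$ on which $P$ beats every candidate neighbour, proves via (68)–(71) (which are the same two Bernoulli inequalities you invoke, plus the balance constraint $\sum_{\mathcal{Y}_0}K_i=\sum_{\mathcal{Y}_1}K_i$, which is your minimal decomposition into unit swaps) that $P$ also beats every farther $Q$ on $\mathcal{B}_P$, and then argues by complement inclusion $\mathcal{A}_Q\subset\mathcal{A}_P$. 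Your version — decompose $Q-P$ into $L$ unit swaps, show $F(Q)-F(P)\le\sum_t(F(R_t)-F(P))$, pigeonhole — reaches the same contrapositive more directly, with less bookkeeping and no need to introduce the auxiliary region $\mathcal{B}_P$. On that score your formulation is cleaner and, I think, easier to verify.

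The one real issue is the membership point you yourself flag, and I do not think your sketch closes it. The pigeonhole step produces \emph{some} index $t^*$ with $F(R_{t^*})\ge F(P)$, but your partial-sum ordering controls which $R_1$ is admissible, not which $t^*$ the subadditivity inequality selects; there is no reason the two should coincide, so "order so that partial sums stay inside $\Delta_{|\mathcal{Y}|-1}^*$" does not guarantee $R_{t^*}\in\Delta_{|\mathcal{Y}|-1}^*$. To make the boundary case rigorous you would need an additional argument — e.g.\ that among the indices $t$ with $F(R_t)\ge F(P)$ at least one $R_t$ lies in $\Delta_{|\mathcal{Y}|-1}^*$, which does not follow from convexity alone. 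To be fair, the paper's proof has the same hole: $\mathcal{B}_P$ is defined by constraints from \emph{all} pairs $(m,n)\in\mathcal{Y}^2$, including candidate neighbours that may fail to lie in $\mathcal{N}_{r_0,|\mathcal{Y}|}^*$, and the passage "for $\Lambda\notin\mathcal{B}_P$ there exists $Q^*\in\mathcal{R}_P$ beating $P$" tacitly assumes the violated constraint corresponds to an actual member of $\mathcal{R}_P$. So you have reproduced the paper's argument faithfully, including its unresolved boundary delicacy; you have simply been more explicit than the authors about where the gap sits.
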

\begin{proof}
	We first consider $\mathcal{N}_{r_0,|\mathcal{Y}|}^*$ constructed by (\ref{eq:Gamma_message_set_DMCs}).  Fix a $P \in \mathcal{N}_{r_0,|\mathcal{Y}|}^*$ and generate $\mathcal{R}_{P}$ corresponding to $P$. Fix $m,n \in \mathcal{Y}$ and $m \neq n$. Denote a $Q^*_{m,n}$ such that $q^*_m = p_m + \frac{1}{\lfloor 1/r \rfloor}$, $q^*_n = p_n - \frac{1}{\lfloor 1/r \rfloor}$, $q^*_i = p_i$, where $i \in \mathcal{Y}\setminus \{m,n\}$. Let
	\begin{equation}
		\left| \log \frac{p_{m}}{p_{m} + 1/\lfloor 1/r \rfloor} \right| = G_{m,n} \left| \log \frac{p_n}{p_n - 1/\lfloor 1/r \rfloor} \right|,
	\end{equation}
	where $G_{m,n}$ is a constant. Define
	\begin{equation}
		\mathcal{B}_{m,n} = \left\{  (\lambda_1,...,\lambda_{|\mathcal{Y}|}):  G_{m,n} \lambda_m < \lambda_n \right\}.
	\end{equation}
	If $ Q^*_{m,n} \in \mathcal{R}_P$, for $\Lambda = (\lambda_1,...,\lambda_{|\mathcal{Y}|}) \in \mathcal{B}_{m,n}$ and $Q^*_{m,n} = (q^*_1,...,q^*_{|\mathcal{Y}|})$, we have 
	\begin{equation} \label{eq:lemma_disjoint_pre_proof_1}
			\prod_{i=1}^{|\mathcal{Y}|} p_i^{\lambda_i} > \prod_{i=1}^{|\mathcal{Y}|} (q_i^*)^{\lambda_i}.
	\end{equation}
	
	For $m,n \in \mathcal{Y}$ and $m \neq n$, we find $\mathcal{B}_{m,n}$ and consider the intersection, i.e., $\mathcal{B}_P = \bigcap_{m,n \in \mathcal{Y}, m \neq n, Q^*_{m,n} \in \mathcal{R}_P} \mathcal{B}_{m,n} $. Then, for $\Lambda \in \mathcal{B}_P$, we have (\ref{eq:lemma_disjoint_pre_proof_1})
	holds for any $Q^* \in \mathcal{R}_{P}$. 
	
	Now we consider $Q\in \mathcal{N}_{r_0,|\mathcal{Y}|}^* \setminus \mathcal{R}_{P} \cap \{ P\}$, which is farther from $P$ in Euclidean distance than $Q^* \in \mathcal{R}_P$. For each $i \in \mathcal{Y}$, we assume that the distance $p_i$ along $q_i$ is $\frac{K_i}{\lfloor 1/r \rfloor}$, where $K_i \in \mathbb{Z}_{\ge 0}$.
	
	Since $p_i -  \frac{K_i}{\lfloor 1/r \rfloor} > 0$ and $K_i \ge 0$, we have
	\begin{equation}
		K_i \left| \log \frac{p_{i}}{p_{i} + 1/\lfloor 1/r \rfloor} \right| \ge \left| \log \frac{p_{i}}{p_{i} + K_i/\lfloor 1/r \rfloor } \right|
	\end{equation}
	and
	\begin{equation}
		K_i\left| \log \frac{p_{i}}{p_{i} - 1/\lfloor 1/r \rfloor} \right| \le \left| \log \frac{p_{i}}{p_{i} - K_i/\lfloor 1/r \rfloor} \right|.
	\end{equation}
	Let $\mathcal{Y}^* = \{ i \in \mathcal{Y}:q_i \neq p_i \}$,  $\mathcal{Y}_0 = \{ i \in \mathcal{Y}^*:q_i<p_i \}$, and $\mathcal{Y}_1 = \{ i \in \mathcal{Y}^*:q_i>p_i \}$. For $i \in \mathcal{Y}_0$, let $q_i' = p_{i} - 1/\lfloor 1/r \rfloor$. For $i \in \mathcal{Y}_1$, let $q_i' = p_{i} + 1/\lfloor 1/r \rfloor$. Then, we have
	\begin{equation} \label{eq:lemma_disjoint_pre_proof_3}
		\sum_{i \in \mathcal{Y}^*} \lambda_i \log \frac{p_i}{q_i} \ge \sum_{i \in \mathcal{Y}^*} \lambda_i K_i \log \frac{p_i}{q_i'}.
	\end{equation}
	Since the restriction of the probability space, we have 
	\begin{equation}\label{eq:lemma_disjoint_pre_proof_4}
	\sum_{i \in \mathcal{Y}_0} K_i = \sum_{i \in \mathcal{Y}_1} K_i.
	\end{equation}
	 Recal that for $ \Lambda \in \mathcal{B}_P$, (\ref{eq:lemma_disjoint_pre_proof_1}) holds for any $Q^* \in \mathcal{R}_{P}$. Then, by the definition of $\mathcal{R_P}$, for $\Lambda \in \mathcal{B}_P$, we have  
	\begin{equation}\label{eq:lemma_disjoint_pre_proof_5}
		\lambda_m \log \frac{p_m}{q_m'} + \lambda_n \log \frac{p_n}{q_n'} > 0,
	\end{equation}
	where $m \in \mathcal{Y}_0$ and $n \in \mathcal{Y}_1$. 
	We combine (\ref{eq:lemma_disjoint_pre_proof_3}), (\ref{eq:lemma_disjoint_pre_proof_4}), and (\ref{eq:lemma_disjoint_pre_proof_5}) to obtain that if  $\Lambda \in \mathcal{B}_P$, then 
	\begin{equation} \label{eq:lemma_disjoint_pre_proof_2}
		\sum_{i \in \mathcal{Y}^*} \lambda_i \log \frac{p_i}{q_i} > 0.
	\end{equation}
	That is
	\begin{equation}
				\prod_{i=1}^{|\mathcal{Y}|} p_i^{\lambda_i} > \prod_{i=1}^{|\mathcal{Y}|} q_i^{\lambda_i}.
	\end{equation}
	
	For $Q \in \mathcal{N}_{r_0,|\mathcal{Y}|}^* \setminus \mathcal{R}_{P} \cap \{ P\}$, define $\mathcal{B}_Q$ such that (\ref{eq:lemma_disjoint_pre_proof_2}) holds for $\Lambda \in \mathcal{B}_Q$. We have $\mathcal{B}_P \subset \mathcal{B}_Q$.
	Denote by $\mathcal{A}_Q = \Delta_{|\mathcal{Y}|-1} \setminus \mathcal{B}_Q$ and $\mathcal{A}_P = \Delta_{|\mathcal{Y}|-1} \setminus \mathcal{B}_P$ the complement of $\mathcal{B}_Q$ and $\mathcal{B}_P$, respectively. We note that if (\ref{eq:lemma:event_disjoint_1}) holds, we have $\Lambda \in \mathcal{A}_Q$. Then, we obtain (\ref{eq:lemma:event_disjoint_2}) holds for a $Q^* \in \mathcal{R}_{P}$ since we obviously have $\mathcal{A}_Q \subset \mathcal{A}_P$. 
	This completes the proof in the case of $\mathcal{N}_{r_0,|\mathcal{Y}|}^*$ constructed by (\ref{eq:Gamma_message_set_DMCs}).
	
	We then give another proof in binary case. Consider $\mathcal{N}_{r_0,|\mathcal{Y}|}^*$ constructed by (\ref{eq:Gamma_message_set_binary}). For convenience, let $p_j < p_m$ if $j < m$. Fix $P_{m-1} = (p_{m-1},1-p_{m-1})$ and $P_{m} = (p_m,1-p_m)$. Let 
	\begin{equation} \label{eq:binary_case_event_compute}
		f(p_j) =  \left( \log \frac{1-p_{j}}{1-p_m}\right) \bigg / \left( \log \frac{p_m(1-p_{j})}{p_{j}(1-p_{m})} \right).
	\end{equation}
	For any $\lambda \in [0,1]$ and $j \in [m-2]$, we have
	\begin{equation} \label{eq:binary_case_event_3}
		p_{j}^{\lambda}(1-p_{j})^{(1-\lambda)} \ge p_{m}^{\lambda}(1-p_{m})^{(1-\lambda) }
	\end{equation}
	holds for $\lambda \in [0,f(p_j)]$.
	Note that $f(p_j)$ is a monotonically increasing function with respect to $p_{j}$. Then, we can obtain 
	\begin{equation} \label{eq:binary_case_event_1}
			p_{m-1}^{\lambda}(1-p_{m-1})^{(1-\lambda)} \ge p_{m}^{\lambda}(1-p_{m})^{(1-\lambda) }
		\end{equation}
	since $[0,f(p_j)] \subset [0,f(p_{m-1})]$. 
	
	For $j \in \mathbb{Z}_{\ge m+2} \cap [m+2,|\mathcal{M}|]$, use the same argument to obtain that (\ref{eq:binary_case_event_3}) holds for $ \lambda \in [f(p_j),1] $ and \begin{equation} \label{eq:binary_case_event_2}
			p_{m+1}^{\lambda}(1-p_{m+1})^{(1-\lambda)} \ge p_{m}^{\lambda}(1-p_{m})^{(1-\lambda) }
	\end{equation}
	holds for $ \lambda \in [f(p_{m+1}),1] $. 
	Then, (\ref{eq:binary_case_event_3}) implies (\ref{eq:binary_case_event_2}) since $[f(p_j),1] \subset [f(p_{m-1}),1]$.
	This completes the proof.
\end{proof}

Now, we turn to prove Lemma \ref{lemma:event_disjoint}.

{\noindent \textit{Proof of Lemma \ref{lemma:event_disjoint}.} }

Using Lemma \ref{lemma:event_disjoint_pre}, for $j \in [|\mathcal{M}|]$, if $\left\{  P_{j}^n(y^n) \ge P_{m}^n(y^n) \right\}$ occurs, we obtain there must have $j \in [|\mathcal{R}_{P_m}|]$ such that $\left\{  Q ^n_j(y^n) \ge P_{m}^n(y^n) \right\}$ occurs. Then, we observe
\begin{align}
	& \mathbb{P} \left[\bigcup_{j=1,j\neq m}^{|\mathcal{M}|}\left\{  P_{j}^n \ge P_{m}^n \right\} \right] \nonumber \\
	&= \sum_{y^n \in \mathcal{Y}^n}  P_{m}^n(y^n) \bigcup_{j=1,j\neq m}^{|\mathcal{M}|}\left\{  P_{j}^n(y^n) \ge P_{m}^n(y^n) \right\} \label{eq:lemma:event_disjoint_proof_1} \\
	&= \sum_{y^n \in \mathcal{Y}^n}  P_{m}^n(y^n) \bigcup_{j =1}^{|\mathcal{R}_{P_m}|} \left\{  Q_j^n(y^n) \ge P_{m}^n(y^n) \right\}. \label{eq:lemma:event_disjoint_proof_2}  \\
	& = \mathbb{P} \left[ \bigcup_{j =1}^{|\mathcal{R}_{P_m}|} \left\{ Q_j^n \ge P_m^n \right\} \right],
\end{align}
where in (\ref{eq:lemma:event_disjoint_proof_1}) we sum over all possible outputs, (\ref{eq:lemma:event_disjoint_proof_2}) relies on Lemma \ref{lemma:event_disjoint_pre}.
This completes the proof of (\ref{eq:lemma:event_disjoint_high_dim}).
{\hfill $\square$}

\section{Properties of $V(P\|Q)$ and $T(P\|Q)$} \label{Appendix_lemma_properities_Vn_Tn}

This appendix is concerned with the behavior of $V(P\|Q)$ and $T(P\|Q)$. We first prove Lemma \ref{lemma:properities_Vn_Tn}.

{\noindent \textit{Proof of Lemma \ref{lemma:properities_Vn_Tn}.} }

	Since the DMC matrix is strictly positive, each term of marginal distribution $P$ is uniformly bounded away from zero, i.e., there exists a $p_{min} \in (0,1)$ such that $ P(y) \ge p_{min}$ for all $y \in \mathcal{Y}$. Similarly, there exists a $p_{max} \in (0,1)$ such that $ P(y) \le p_{max}$ for all $y \in \mathcal{Y}$. Clearly, we have $p_{min} = \min_{y \in \mathcal{Y}} \{ P(y):P \in \mathcal{N}_{r_0,|\mathcal{Y}|}^* \}$ and $p_{max} = \max_{y \in \mathcal{Y}} \{ P(y):P \in \Delta_{|\mathcal{Y}|-1}^* \} $. Without loss of generality, we assume $|\mathcal{N}_{r_0,|\mathcal{Y}|}^*| \ge 2$. That is, we have $r_0 \le \frac{2 \log e}{9}$.
	
	We know that 
	\begin{align}
		P(y) 
		& = \frac{a}{1/r - \delta} \nonumber \\
		& = \frac{a\sqrt{\frac{r_0}{2 \log e}}}{1-\delta \sqrt{\frac{r_0}{2 \log e}}},
	\end{align}
	where $\delta \in [0,1)$ and $a \in \mathbb{Z}_{\ge 1}$. 
	
	We consider each term $P(y)$ in $P$ separately. If $P(y) = Q(y)$, we obviously get zero. If $P(y) > Q(y)$, we have $Q(y) = \frac{(a-1)\sqrt{\frac{r_0}{2 \log e}}}{1-\delta \sqrt{\frac{r_0}{2 \log e}}}$. Here, since the distribution in $\mathcal{N}_{r_0,|\mathcal{Y}|}^*$ have no zero term, we have $a \in \mathbb{Z}_{\ge 2}$. Since $a = P(y)\left( 1-\delta \sqrt{\frac{r_0}{2 \log e}} \right)\sqrt{\frac{2\log e}{r_0}}$, we have
	\begin{align} \label{eq:lemma:properities_Vn_Tn_proof_1}
		&\frac{\sqrt{r_0\log e}}{P(y) (\sqrt{2}-\delta\sqrt{r_0/\log e})} \le \log \frac{P(y)}{Q(y)} \nonumber \\
		&\ \ \ \ \ \ \ \le \frac{\sqrt{r_0\log e}}{P(y) (\sqrt{2}-\delta\sqrt{r_0/\log e}) - \sqrt{r_0/\log e}}.
	\end{align}
	Consequently, we have
	\begin{equation} \label{eq:lemma:properities_Vn_Tn_proof_2}
		P(y) \left(\log \frac{P(y)}{Q(y)} \right)^2 \ge \frac{r_0 \log e}{2P(y)}
	\end{equation}
	and
	\begin{align} \label{eq:lemma:properities_Vn_Tn_proof_3}
		P(y) \left(\log \frac{P(y)}{Q(y)} \right)^2 
		&\le \frac{r_0 \log e}{2(a-1)^2 P(y)/a^2} \nonumber \\
		&\le \frac{2r_0 \log e}{P(y)} \left( \frac{1}{1-\delta\sqrt{\frac{r_0}{2 \log e}}} \right)^2.
	\end{align}
	
	If $P(y) < Q(y)$, consider $Q(y) = \frac{(a+1)\sqrt{\frac{r_0}{2 \log e}}}{1-\delta \sqrt{\frac{r_0}{2 \log e}}}$, where $a \in \mathbb{Z}_{\ge 1} $. Applying the same argument we have
	\begin{align}
		&\frac{\sqrt{r_0\log e}}{P(y) (\sqrt{2}-\delta\sqrt{r_0/\log e}) + \sqrt{r_0/\log e}} \le \bigg|  \log \frac{P(y)}{Q(y)} \bigg| \nonumber \\
		& \ \ \ \ \ \ \ \ \ \ \le \frac{\sqrt{r_0\log e}}{P(y) (\sqrt{2}-\delta\sqrt{r_0/\log e})}.
	\end{align}
	Consequently, we have
	\begin{align} \label{eq:lemma:properities_Vn_Tn_proof_4}
		  P(y) \left(\log \frac{P(y)}{Q(y)} \right)^2 
		  & \ge \frac{r_0 \log e}{2(a+1)^2 P(y)/a^2} \nonumber \\
		  &\ge \frac{r_0 \log e}{8P(y)}
	\end{align}
	and
	\begin{align}\label{eq:lemma:properities_Vn_Tn_proof_5}
	 	P(y) \left(\log \frac{P(y)}{Q(y)} \right)^2 
	 	& \le \frac{r_0 \log e}{2P(y)} \left( \frac{1}{1-\delta\sqrt{\frac{r_0}{2 \log e}}} \right)^2.
	\end{align}
	
	Then, we see that
	\begin{align}
		V(P\|Q) &\ge - D(P\|Q)^2 + \sum_{y} P(y) \left(\log \frac{P(y)}{Q(y)} \right)^2 \label{eq:lemma:properities_Vn_Tn_proof_6}  \\
		& \ge r_0 \log e  \left( \frac{5}{8p_{max}} - \frac{r_0}{\log e} \right), \label{eq:lemma:properities_Vn_Tn_proof_7}
	\end{align}
	where (\ref{eq:lemma:properities_Vn_Tn_proof_6}) simply expand the variance, (\ref{eq:lemma:properities_Vn_Tn_proof_7}) holds  since the definition of $\mathcal{R}_{P}$ ensures that for all $y\in \mathcal{Y}$, we only have two terms that are not zero and can be bounded by (\ref{eq:lemma:properities_Vn_Tn_proof_2}) and (\ref{eq:lemma:properities_Vn_Tn_proof_4}), respectively. Finally, we complete the proof of the lower bound by noting that $r_0 \le \frac{2 \log e}{9}$.
	
	Note that
	\begin{equation}
		\left( \frac{1}{1-\delta\sqrt{\frac{r_0}{2 \log e}}} \right)^2 \le \frac{1}{(1-p_{max})^2}.
	\end{equation} 
	Using (\ref{eq:lemma:properities_Vn_Tn_proof_3}) and (\ref{eq:lemma:properities_Vn_Tn_proof_5}), we obtain the upper bound
	\begin{align}
		V(P\|Q) 
		& \le \frac{5 \log e}{2p_{min}(1-p_{max})^2} r_0, \label{eq:lemma:properities_Vn_Tn_proof_8}
	\end{align}
	which yields (\ref{eq:lemma:properities_Vn_Tn_1}).
	
	We now turn to bound $T(P\|Q)$. For $P(y) > Q(y)$, we have
	\begin{equation}
		P(y) \bigg | \log \frac{P(y)}{Q(y)} \bigg |^3 \le \frac{4}{\sqrt{2}}\frac{(r_0 \log e)^{3/2}}{P^2(y)} \left( \frac{1}{1-\delta\sqrt{\frac{r_0}{2 \log e}}} \right)^3.
	\end{equation}
	For $P(y) < Q(y)$, we have
	\begin{equation}
		P(y) \bigg |\log \frac{P(y)}{Q(y)} \bigg |^3 \le \frac{1}{2\sqrt{2}}\frac{(r_0 \log e)^{3/2}}{P^2(y)} \left( \frac{1}{1-\delta\sqrt{\frac{r_0}{2 \log e}}} \right)^3.
	\end{equation}
	Note that $D(P\|Q)$ can be bounded by the following: 
	\begin{equation}
		D(P\|Q) \le \frac{3\sqrt{2 \log e}}{2} \sqrt{r_0}  \left( \frac{1}{1-\delta\sqrt{\frac{r_0}{2 \log e}}} \right).
	\end{equation}
	Using the inequality $|a-b|^3 \le 4(|a|^3+|b|^3)$, we obtain 
	\begin{align}
		T(P\|Q) 
		& \le \frac{36\sqrt{2} (\log e )^{3/2}}{p_{min}^2 (1-p_{max})^3}  r_0^{3/2}, \label{eq:lemma:properities_Vn_Tn_proof_20}
	\end{align}
	which establishes (\ref{eq:lemma:properities_Vn_Tn_2}).
{\hfill $\square$}

Then, we prove Lemma \ref{lemma:properities_Vn_Tn_binary} for $\mathcal{N}_{r_0,2}^*$ constructed by (\ref{eq:Gamma_message_set_binary}).

{\noindent \textit{Proof of Lemma \ref{lemma:properities_Vn_Tn_binary}.} }
	
	We first note that
	\begin{align}
		P(y) 
		& = \frac{a \xi }{1/r - \delta} \nonumber \\
		& = \frac{\left( a - \frac{\delta_1\delta}{\xi} + F_0 \right)\sqrt{\frac{r_0}{2 \log e}}}{1-\frac{\delta}{\xi} \sqrt{\frac{r_0}{2 \log e}}},	  
	\end{align}
	where $\delta \in [0,1]$, $F_0 = \delta_1/\sqrt{\frac{r_0}{2 \log e}}$, and $a \in \mathbb{Z}_{\ge 1}$. Note that $F_0 - \frac{\delta_1\delta}{\xi} \ge 0$ holds for small $r_0$. Repeat the proof of Lemma \ref{lemma:properities_Vn_Tn}, replacing $a$ with $ a - \frac{\delta_1\delta}{\xi} + F_0 $ and $\delta$ with $\frac{\delta}{\xi}$.  Then, there exists a packing radius $r_1$, such that for all $r_0 \le r_1$, we have
	\begin{equation}
		F_0 r_0 \le V(P\|Q) \le F_1 r_0
	\end{equation}
	and 
	\begin{equation}
		T(P\|Q) \le F_2 r_0^{3/2},
	\end{equation}
	where $F_0$, $F_1$ and $F_2$ are positive and finite.
{\hfill $\square$}

\newpage

\printbibliography[heading=bibintoc, title={References}]


\newpage

\end{document}